\titleformat{\section}[hang]{\Large\bfseries\filcenter}{}{1em}{}
\titleformat{\subsection}[hang]{\bfseries}{}{1em}{}
\newtheorem{definition}{Definition}
\newtheorem{proposition}[definition]{Proposition}
\newtheorem{lemma}[definition]{Lemma}
\newtheorem{theorem}[definition]{Theorem}
\newtheorem{question}[definition]{Question}
\def\squareforqed{\hbox{\rlap{$\sqcap$}$\sqcup$}}
\def\qed{\ifmmode\squareforqed\else{\unskip\nobreak\hfil
\penalty50\hskip1em\null\nobreak\hfil\squareforqed
\parfillskip=0pt\finalhyphendemerits=0\endgraf}\fi}
\def\endenv{\ifmmode\;\else{\unskip\nobreak\hfil
\penalty50\hskip1em\null\nobreak\hfil\;
\parfillskip=0pt\finalhyphendemerits=0\endgraf}\fi}
\newenvironment{proof}{\noindent \textbf{{Proof.~} }}{\qed}
\def\Dbar{\leavevmode\lower.6ex\hbox to 0pt
{\hskip-.23ex\accent"16\hss}D}
\def\bpf{\begin{proof}}
\def\epf{\end{proof}}
\newcommand{\ket}[1]{|{#1}\rangle}
\newcommand{\abs}[1]{\left\lvert {#1} \right\rvert}
\newcommand{\red}{\textcolor{red}}
\newcommand{\tbc}{\red{TO BE CONTINUED}.~}
\newcommand{\nc}{\newcommand}
\def\bea{\begin{eqnarray}}
\def\eea{\end{eqnarray}}
\def\beq{\begin{equation}}
\def\eeq{\end{equation}}
\def\bal{\begin{aligned}}
\def\eal{\end{aligned}}
\def\bma{\begin{bmatrix}}
\def\ema{\end{bmatrix}}
\def\diag{\mathop{\rm diag}}
\def\dg{\dagger}
\def\ra{\rightarrow}
\def\a{\alpha}
\def\b{\beta}
\def\i{\iota}
\def\m{\mu}
\def\p{\pi}
\def\o{\omega}
\nc{\bbA}{\mathbb{A}} \nc{\bbB}{\mathbb{B}} \nc{\bbC}{\mathbb{C}}
\nc{\bbD}{\mathbb{D}} \nc{\bbE}{\mathbb{E}} \nc{\bbF}{\mathbb{F}}
\nc{\bbG}{\mathbb{G}} \nc{\bbH}{\mathbb{H}} \nc{\bbI}{\mathbb{I}}
\nc{\bbJ}{\mathbb{J}} \nc{\bbK}{\mathbb{K}} \nc{\bbL}{\mathbb{L}}
\nc{\bbM}{\mathbb{M}} \nc{\bbN}{\mathbb{N}} \nc{\bbO}{\mathbb{O}}
\nc{\bbP}{\mathbb{P}} \nc{\bbQ}{\mathbb{Q}} \nc{\bbR}{\mathbb{R}}
\nc{\bbS}{\mathbb{S}} \nc{\bbT}{\mathbb{T}} \nc{\bbU}{\mathbb{U}}
\nc{\bbV}{\mathbb{V}} \nc{\bbW}{\mathbb{W}} \nc{\bbX}{\mathbb{X}}
\nc{\bbZ}{\mathbb{Z}}
\nc{\bA}{{\bf A}} \nc{\bB}{{\bf B}} \nc{\bC}{{\bf C}}
\nc{\bD}{{\bf D}} \nc{\bE}{{\bf E}} \nc{\bF}{{\bf F}}
\nc{\bG}{{\bf G}} \nc{\bH}{{\bf H}} \nc{\bI}{{\bf I}}
\nc{\bJ}{{\bf J}} \nc{\bK}{{\bf K}} \nc{\bL}{{\bf L}}
\nc{\bM}{{\bf M}} \nc{\bN}{{\bf N}} \nc{\bO}{{\bf O}}
\nc{\bP}{{\bf P}} \nc{\bQ}{{\bf Q}} \nc{\bR}{{\bf R}}
\nc{\bS}{{\bf S}} \nc{\bT}{{\bf T}} \nc{\bU}{{\bf U}}
\nc{\bV}{{\bf V}} \nc{\bW}{{\bf W}} \nc{\bX}{{\bf X}}
\nc{\bZ}{{\bf Z}}
\nc{\bmA}{{\bm A}} \nc{\bmB}{{\bm B}} \nc{\bmC}{{\bm C}}
\nc{\bmD}{{\bm D}} \nc{\bmE}{{\bm E}} \nc{\bmF}{{\bm F}}
\nc{\bmG}{{\bm G}} \nc{\bmH}{{\bm H}} \nc{\bmI}{{\bm I}}
\nc{\bmJ}{{\bm J}} \nc{\bmK}{{\bm K}} \nc{\bmL}{{\bm L}}
\nc{\bmM}{{\bm M}} \nc{\bmN}{{\bm N}} \nc{\bmO}{{\bm O}}
\nc{\bmP}{{\bm P}} \nc{\bmQ}{{\bm Q}} \nc{\bmR}{{\bm R}}
\nc{\bmS}{{\bm S}} \nc{\bmT}{{\bm T}} \nc{\bmU}{{\bm U}}
\nc{\bmV}{{\bm V}} \nc{\bmW}{{\bm W}} \nc{\bmX}{{\bm X}}
\nc{\bmZ}{{\bm Z}}
\nc{\cA}{{\cal A}} \nc{\cB}{{\cal B}} \nc{\cC}{{\cal C}}
\nc{\cD}{{\cal D}} \nc{\cE}{{\cal E}} \nc{\cF}{{\cal F}}
\nc{\cG}{{\cal G}} \nc{\cH}{{\cal H}} \nc{\cI}{{\cal I}}
\nc{\cJ}{{\cal J}} \nc{\cK}{{\cal K}} \nc{\cL}{{\cal L}}
\nc{\cM}{{\cal M}} \nc{\cN}{{\cal N}} \nc{\cO}{{\cal O}}
\nc{\cP}{{\cal P}} \nc{\cQ}{{\cal Q}} \nc{\cR}{{\cal R}}
\nc{\cS}{{\cal S}} \nc{\cT}{{\cal T}} \nc{\cU}{{\cal U}}
\nc{\cV}{{\cal V}} \nc{\cW}{{\cal W}} \nc{\cX}{{\cal X}}
\nc{\cZ}{{\cal Z}}
\begin{document}

%\Large

\title{Real Entries of Complex Hadamard Matrices and Mutually Unbiased Bases in Dimension Six}

\author{Mengfan Liang}
\affiliation{School of Mathematics and Systems Science, Beihang University, Beijing 100191, China}

\author{Mengyao Hu}\email[]{humengyao@bjfu.edu.cn (corresponding author)}
\affiliation{School of Mathematics and Systems Science, Beihang University, Beijing 100191, China}

\author{Yize Sun}\email[]{sunyize@buaa.edu.cn (corresponding author)}
\affiliation{School of Mathematics and Systems Science, Beihang University, Beijing 100191, China}

\author{Lin Chen}\email[]{linchen@buaa.edu.cn (corresponding author)}
\affiliation{School of Mathematics and Systems Science, Beihang University, Beijing 100191, China}
\affiliation{International Research Institute for Multidisciplinary Science, Beihang University, Beijing 100191, China}

\begin{abstract}
We investigate the number of real entries of an $n\times n$ complex Hadamard matrix (CHM). We analytically derive the numbers when $n=2,3,4,6$. In particular, the number can be any one of $0-22,24,25,26,30$ for $n=6$. We apply our result to the existence of four mutually unbiased bases (MUBs) in dimension six, which is a long-standing open problem in quantum physics and information. We show that if four MUBs containing the identity matrix exists then the real entries in any one of the remaining three matrices does not exceed $22$.
\end{abstract}

\date{\today}

\maketitle

Keywords: complex Hadamard matrix, real entry, mutually unbiased basis

\tableofcontents

\section{Introduction}
\label{sec:intro}

In this paper we shall refer to the complex Hadamard matrix (CHM) as a square matrix having elements of modulus one, and pairwise orthogonal row and column vectors. We propose and investigate the following question. 
\begin{question}
\label{qu:main}
How many real entries are there in a given CHM?	
\end{question}
We partially answer this question, and leave the complete answer as an open problem. We construct preliminary results in Lemma \ref{le:linearalg}, \ref{le:properties} and \ref{le:mubtrio}. As the first main result of this paper, we shall characterize the number of real entries of $n\times n$ CHMs with $n=2$ and $n=3,4,6$ in Lemma \ref{le:s2} and Theorem \ref{thm:s3456}, respectively. We show that the number for $n=3,4,6$ belongs to the set $\{0,1,2,3,4,5,6\}$, $\{0-10,12,16\}$, and $\{0-22,24,25,26,30\}$, respectively. The proof for $n=6$ is the most complicated, and we explain it 
in Proposition \ref{pp:0-2224252630inCS6} and \ref{pp:3133-36notinCS6}. The latter is  based on Lemma \ref{le:23notincs6-a1=0}, \ref{le:23notincs6-a1=1}, and \ref{le:23notincs6-a1=2}. We shall analytically construct $6\times6$ CHMs containing exactly $n$ real entries with the above-mentioned integer $n$, and excluded the CHMs containing exactly $n$ real entries with $n$ not mentioned above. We apply our results to a conjecture on the existence of so-called four six dimensional mutually unbiased basis (MUBs) from quantum physics and information. In Theorem \ref{thm:app}, we show that the CHM as a member of an MUB trio has at most $22$ real elements.

Characterizing the $n\times n$ CHM, especially the case $n=6$ is a basic problem in algebra and quantum information theory \cite{Sz12,mb15,Goyeneche13, Turek2016A,NICOARA2019143, Sz2008Parametrizing}. It is known that the real Hadamard matrix has order $2$ or $4k$, and whether it exists for any integer $k$ has been an open problem for more than one century \cite{Djokovic2014Some}. Characterizing the $n\times n$ CHM is a more complex problem, though it is done for $n=2,3,4,5$ \cite{mub09}. Finding the real entries is thus an operational method of studying CHM. There is no systematic result as far as we know. On the other hand, the quantum state is a unit vector in linear algebra. Two states in the $d$-dimensional Hilbert space $\bbC^d$ are MU when their inner product has modulus ${1\over\sqrt d}$. We say that two orthonormal basis are MU when their elements are all MU. If $n$ orthonormal bases $\cB_1,\cB_2,..,\cB_n$ in $\bbC^d$ are pairwise MU then we say that they are MUBs in $\bbC^d$. By regarding $\cB_j$ as the column vectors of a matrix $B_j$, then the latter is a unitary matrix. For simplicity we shall also refer to $B_1,B_2,..,B_n$ as MUBs. Evidently $UB_1,UB_2,..,UB_n$ are still MUBs for any unitary matrix $U$. If we choose $U=B_1^\dg$ then $UB_1$ is the $d\times d$ identity matrix denoted as $I_d$. Since it is MU to $UB_2,..,UB_n$, any one of these matrices is a CHM multiplied by ${1\over\sqrt d}$. For $d=6$, it has been a long-standing open problem whether four MUBs $I_6,V,W,X$ exist. If it exists then we refer to $V,W,X$ as an MUB trio. In spite of much efforts devoted to the open problem in the past decades \cite{Boykin05,bw08,bw09,jmm09,bw10,deb10,wpz11,mw12ijqi,mw12jpa135307,rle11,mw12jpa102001,mpw16, Chen2017Product,Chen2018Mutually, Designolle2018Quantifying}, there has been little understand of the MUB trio, though it is believed not to exist. Theorem \ref{thm:app} gives an upper bound on the number of real entries of $6\times6$ CHMs of an MUB trio, if it exists. As far as we know, this is the first time the real entry of CHM has been investigated for studying the existence problem of four six-dimensional MUBs. It provides theoretically novel view to the MUB existence problem and related problems in quantum information, such as the understanding of general unitary matrices, tensor rank and unextendible product basis \cite{Baerdemacker2017The,Chen2018The,kai=lma}.

The rest of this paper is structured as follows. In Sec. \ref{sec:pre} we construct the notion of CHMs, equivalence and complex equivalence of matrices, characterizing of CHMs with few imaginary entries and preliminary results from linear algebra. In Sec. \ref{sec:mainresult} we introduce the notations on the number of real entries of CHMs, and the main result of this paper, namely characterizing the number of real entries of $3\times3$, $4\times4$, and $6\times6$ CHMs respectively. We provide the proof details in Sec. \ref{sec:6x6proof}. We conclude in Sec. \ref{sec:con}.

\section{Preliminaries}
\label{sec:pre}

In this section we introduce the fundamental notations and facts we use in this paper. We start by reviewing the complex Hadamard matrices.

\begin{definition}
\label{df:hada}
We refer to the $n\times n$ complex Hadamard matrix (CHM) $H_n=[u_{ij}]_{i,j=1,...,n}$ as a  matrix with orthogonal row vectors and entries of modulus one. That is,
\begin{eqnarray}
H_n^\dag H_n = nI_n,
\quad\quad
\abs{u_{ij}} = 1.
\end{eqnarray}
\qed	
\end{definition}

To find out the connection between different CHMs, we define the equivalence and complex equivalence. 

\begin{definition}
\label{df:skewper}
(i) We refer to the monomial unitary matrix as a unitary matrix each of whose row and columns has exactly one nonzero entry. The entry has modulus one. Let $\cM_n$ be the set of $n\times n$ monomial unitary matrices.

(ii) We say that two $n\times n$ matrices $U$ and $V$ are complex equivalent when $U=PVQ$ where $P,Q\in\cM_n$. If $P,Q$ are both permutation matrices then we say that $U,V$ are equivalent
\qed	
\end{definition}

Evidently if $U,V$ are equivalent then they are complex equivalent, and the converse fails. The number of real entries of a CHM may be changed under complex equivalence, while it is unchanged under equivalence. For example, it is straightforward to show that any $n\times n$ CHM is complex equivalent to a CHM containing at least $2n+1$ entry one. They are in the first column and row of the CHM. Nevertheless, investigating the real entries of a general CHM is a complex problem. 

In the following lemma we introduce useful results from linear algebra.
\begin{lemma}
\label{le:linearalg}	
(i) Suppose $a+b+c=0$ with complex numbers $a,b,c$ of modulus one. Then $(a,b,c)\propto(1,\o,\o^2)$ or $(1,\o^2,\o)$ with $\o=e^{2\p i\over 3}$.

(ii) Suppose $a+b+c+d=0$ with complex numbers $a,b,c,d$ of modulus one. Then $a=-b,-c$ or $-d$. 
\end{lemma}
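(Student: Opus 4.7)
The plan for both parts is to exploit the identity $\bar z = 1/z$ for unit complex numbers, which converts the given linear relations into polynomial identities in $a,b,c,d$. In each case I normalize by dividing through by $a$, which is harmless because it only introduces an overall unimodular factor that is absorbed into the ``$\propto$'' in (i) and does not affect the statement ``$a\in\{-b,-c,-d\}$'' in (ii).

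For (i), I would set $a=1$ without loss of generality, so $b+c=-1$ with $|b|=|c|=1$. Taking complex conjugates and using $\bar b=1/b$, $\bar c=1/c$ yields $1/b+1/c=-1$, i.e.\ $b+c=-bc$. Combined with $b+c=-1$ this forces $bc=1$. Hence $b$ and $c$ are the two roots of $x^2+x+1=0$, namely $\omega$ and $\omega^2$. Undoing the normalization gives the claim, with the two listed orderings corresponding to the two ways of assigning $\{\omega,\omega^2\}$ to $(b,c)$.

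For (ii), the same reduction gives $b+c+d=-1$ with $|b|=|c|=|d|=1$. Conjugating and clearing denominators produces $bc+bd+cd=-bcd$. Writing $P:=bcd$, the three elementary symmetric polynomials of $b,c,d$ are $-1$, $-P$, and $P$, so $b,c,d$ are precisely the three roots of
\[
x^3+x^2-Px-P \;=\; x^2(x+1)-P(x+1) \;=\; (x+1)(x^2-P).
\]
Therefore $-1$ lies in the multiset $\{b,c,d\}$; whichever coordinate equals $-1$ is the negative of $a=1$, proving the claim.

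The main obstacle I anticipate is resisting the temptation to attack (ii) by a trigonometric or geometric parallelogram argument on four unit vectors summing to zero, which tends to branch into several cases. The decisive observation is that conjugation collapses all elementary symmetric functions of $b,c,d$ to expressions in the single parameter $P=bcd$, producing a cubic with $x+1$ as a visible factor. Once this is noticed, both parts reduce to one-line polynomial computations rather than angle chasing.
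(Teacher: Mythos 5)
Your proof is correct and complete. The paper itself offers no argument here---it simply states that both assertions ``can be proven straightforwardly''---so there is no authorial route to compare against; your normalization $a=1$ followed by the conjugation identity $\bar z=1/z$ and the elementary-symmetric-function computation (in particular the factorization $x^3+x^2-Px-P=(x+1)(x^2-P)$ forcing $-1\in\{b,c,d\}$) is a clean, case-free way to supply the missing details, and both steps check out.
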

\begin{proof}
Assertion (i) and (ii) can be proven straightforwardly. 
\end{proof}

To investigate Question \ref{qu:main}, we construct a few properties of CHMs in Lemma \ref{le:properties}. It is one of the main tools we use for proving our main result in Theorem \ref{thm:s3456}.

\begin{lemma}
\label{le:properties}
Suppose $H_n$ is an $n\times n$ CHM.

(i) If $P,Q\in\cM_n$, then $PH_nQ$ and $H_n^T$ are both $n\times n$ CHMs.

(ii.a) If the first row of $H_n$ is real, then the second row of $H_n$ does not have exactly one imaginary entry. 

(ii.b) Furthermore if $n$ is even then the second row of $H_n$ does not have exactly three imaginary entries. Equivalently, $H_n$ has no submatrix $\bma 
r_1&r_2&r_3&r_4&r_5&i_1\\
r_6&r_7&r_8&i_2&i_3&r_9\\
\ema$ with real $r_j$ and imaginary $i_k$.

(ii.c) Furthermore if $n$ is even and the second row of $H_n$ has exactly two imaginary entries, then they are equal or opposite numbers. Up to equivalence the first two rows of $H_n$ are 
\begin{eqnarray}
\bma
1&1& v_{{n\over2}-1} & v_{{n\over2}-1} \\
x&-x& v_{{n\over2}-1} & -v_{{n\over2}-1}\\
\ema,	
\end{eqnarray}
where $x$ is an imaginary number of modulus one, and $v_m$ is the $m$-dimensional vector of element one.

(ii.d) Furthermore if $n=6$ and the second and third rows of $H_6$ have both exactly two imaginary entries, then up to equivalence the first three rows of $H_6$  are one of the following four matrices.
\begin{eqnarray}
\label{eq:row123-1}
&&
H_{61}=
\bma
1&1& 1&1&1&1 \\
i&-i& 1&1&-1&-1\\
i& 1 & -i& -1& -1& 1\\
\ema,
\\&&\label{eq:row123-2}
H_{62}=
\bma
1&1& 1&1&1&1 \\
-i&i& 1&1&-1&-1\\
-i& 1 & i& -1& -1& 1\\
\ema,
\\&&\label{eq:row123-3}
H_{63}=
\bma
1&1& 1&1&1&1 \\
\o^2 &-\o^2& 1&1&-1&-1\\
1& -1& \o^2 & 1 & -\o^2 & -1\\
\ema,
\\&&\label{eq:row123-4} 
H_{64}=\bma
1&1& 1&1&1&1 \\
\o &-\o & 1&1&-1&-1\\
1& -1& \o & 1 & -\o & -1\\
\ema,
\end{eqnarray}
where $\o:=e^{{2\p i\over3}}$, $H_{61}=H_{62}^*$ and $H_{63}=H_{64}^*$.

(iii) If $n$ is odd then $H_n$ has no two real columns or two real rows. 

(iv) If $n\equiv2(\mod4)$ then $H_n$ has no three real columns or three real rows.

(v) $H_6$ does not have a $4\times3$ or $3\times4$ real submatrix. 

(vi) If $H_6$ has $n(\ge3)$ rows each of which has exactly one imaginary entry, then the entries are in different columns of $H_6$. Further, the entries are $i$ or $-i$. 

(vii) $H_6$ has neither of the following two three rows.
\begin{eqnarray}
&&
\bma	
r_1 & r_2 & r_3 & r_4 & i_1 & c_1 \\
r_5 & r_6 & r_7 & r_8 & i_2 & c_2 \\
r_9 & r_{10} & r_{11} & c_3 & r_{12} & c_4 \\
\ema,
\quad\quad
\bma	
r_1 & r_2 & r_3 & r_4 & r_5 & c_1 \\
r_6 & r_7 & r_8 & r_9 & r_{10} & c_2 \\
r_{11} & r_{12} & r_{13} & c_3 & c_4 & c_3 \\
\ema,
\end{eqnarray} 
where the entry $r_j$ is real, $i_k$ is imaginary and $c_l$ is complex.

(viii) If $H_6$ has the submatrix 
$\bma 
i_1&i_2&r_1&r_2&r_3&r_4\\
i_3&r_5&i_4&r_6&r_7&r_8\\
\ema$ with real $i_j$ and imaginary $r_k$, then $i_1,i_3$ are equal or opposite. Further if $w_2,w_3$ are equal or opposite then they are $i$ or $-i$.
\end{lemma}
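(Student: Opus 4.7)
The plan is to invoke row orthogonality of $H_6$ on the two displayed rows, namely $\sum_{k=1}^{6}\overline{A_k} B_k = 0$, and then separate the resulting identity into its real and imaginary parts.

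Since every entry of a CHM has modulus one, a real entry is necessarily $\pm 1$ and an imaginary entry is necessarily $\pm i$; in particular any two real (or any two imaginary) entries of $H_6$ are automatically related by a sign. Adopting the labelling of the statement (so $i_j$ marks a real entry and $r_k$ marks an imaginary entry), I would classify each of the six column products $\overline{A_k}B_k$ by type: column $1$ yields $i_1 i_3$ (real $\times$ real, hence real); columns $4,5,6$ yield $\overline{r_j}r_k = -r_j r_k$ (imaginary $\times$ imaginary, hence real); columns $2$ and $3$ yield $i_2 r_5$ and $\overline{r_1} i_4 = -r_1 i_4$ (mixed, hence imaginary). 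Splitting the orthogonality identity into real and imaginary parts then gives
\[
i_1 i_3 \;=\; r_2 r_6 + r_3 r_7 + r_4 r_8,
\qquad
i_2 r_5 \;=\; r_1 i_4.
\]

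The first assertion follows immediately from the first equation together with the modulus-one constraint: $i_1, i_3 \in \{\pm 1\}$, so $i_1 i_3 \in \{\pm 1\}$, i.e., $i_1 = \pm i_3$, which is exactly the statement that $i_1, i_3$ are equal or opposite. (As a by-product, the sum on the right is forced to lie in $\{\pm 1\}$ rather than $\{\pm 3\}$, which imposes that the three real products $r_j r_k$ are not all of the same sign.) For the second assertion I would read the equation $i_2 r_5 = r_1 i_4$ as a matching of two elements of $\{\pm i\}$: this pins down a sign relation among $r_1, r_5$ (resp. $i_2, i_4$), and then the modulus-one constraint forces any entry found to be purely imaginary to belong to $\{i,-i\}$, so whenever two such entries are equal or opposite they are in particular in $\{i,-i\}$.

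The main obstacle is bookkeeping: one must correctly classify each of the six column products as real or imaginary and keep track of the conjugation signs coming from $\overline{\pm i} = \mp i$. No creative idea is required beyond the two ingredients already at hand, namely the modulus-one constraint and the linear orthogonality relation between the two rows.
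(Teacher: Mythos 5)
Your proposal addresses only part (viii) of this eight-part lemma. Nothing is said about (i)--(vii); in particular the case analysis behind (ii.d) (solving the four orthogonality equations for $M_1,\dots,M_4$), the parity arguments for (iii)--(iv), and the contradiction arguments for (v)--(vii) are entirely absent, so as a proof of the statement as a whole it is far from complete.

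Even restricted to (viii) there is a genuine gap. Your computation rests on the premise that an ``imaginary'' entry of modulus one must be $\pm i$. In this paper ``imaginary'' means \emph{non-real}: for instance $\o$ and $\o^2$ are counted as the two imaginary entries of the second rows of $H_{63}$ and $H_{64}$ in (ii.d), and the separate clauses ``the entries are $i$ or $-i$'' in (vi) and in (viii) itself would be redundant under your reading. Consequently $\overline{r}\neq -r$ in general, and $\overline{r_j}\,r_k$ need not be real for two imaginary entries $r_j,r_k$; your split of $\sum_k\overline{A_k}B_k=0$ into three real and three imaginary column products therefore collapses, and both displayed identities fail. Moreover, the labels in the statement are evidently garbled (the $i_j$ are declared real, and the second clause refers to $w_2,w_3$ which do not appear in the display): read literally, ``$i_1,i_3$ are equal or opposite'' is vacuous, since two real unimodular numbers are each $\pm1$, and that vacuous version is all your argument delivers. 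The intended content --- that the two \emph{non-real} entries sharing a column satisfy $w_3=\pm w_1$ --- is precisely what cannot be assumed, and it is what the paper proves by reduction to (ii.b): if $\overline{w_1}w_3$ were non-real, the entrywise product $\overline{A}\circ B$ of the two rows would be a zero-sum unimodular $6$-vector with exactly three non-real entries (the shared column plus the two mixed columns), which is the configuration excluded by (ii.b). Your treatment of the second assertion is likewise circular: you conclude the entries lie in $\{i,-i\}$ only because you assumed at the outset that every imaginary entry does.
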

\begin{proof}
Assertion (i),(ii.a)-(ii.c) and (iii) follow from the fact that any two row vectors of a CHM are orthogonal. 

(ii.d) It follows 
from (ii.c) that up to equivalence, the first two rows of $H_6$ are $\bma
1&1& 1&1&1&1 \\
x&-x& 1&1&-1&-1\\
\ema,$ where $x$ is an imaginary number of modulus one. Since the third row of $H_6$ has exactly two imaginary entries, they are $y,-y$ using the orthogonality of first and third row of $H_6$. Up to equivalence the first three rows of $H_6$ have four cases.
\begin{eqnarray}
\label{eq:M1=matrix}
&&
M_1=
\bma
1&1& 1&1&1&1 \\
x&-x& 1&1&-1&-1\\
y& -y & a_1& b_1& c_1& d_1\\
\ema,	
\\&&
\label{eq:M2=matrix}
M_2=
\bma
1&1& 1&1&1&1 \\
x&-x& 1&1&-1&-1\\
y& a_2 & -y& b_2& c_2& d_2\\
\ema,
\\&&
\label{eq:M3=matrix}
M_3=
\bma
1&1& 1&1&1&1 \\
x&-x& 1&1&-1&-1\\
a_3& b_3& y& c_3 & -y& d_3\\
\ema,
\\&&
\label{eq:M4=matrix}
M_4=
\bma
1&1& 1&1&1&1 \\
x&-x& 1&1&-1&-1\\
a_4& b_4& y& -y & c_4& d_4\\
\ema,
\end{eqnarray}
where $y$ is an imaginary number of modulus one. Further 
\begin{eqnarray}
\label{eq:ajbjcjdj}	
\{a_j,b_j,c_j,d_j\}=\{1,1,-1,-1\}, 
\end{eqnarray}
because row $1$ and $3$ of $M_j$ are orthogonal for $j=1,2,3,4$.  Note that row $2$ and $3$ of $M_j$ are also orthogonal. Using \eqref{eq:M1=matrix}-\eqref{eq:M4=matrix} we have
\begin{eqnarray}
\label{eq:M1}
&& 2x^*y+a_1+b_1-c_1-d_1=0,
\\&& \label{eq:M2}
x^*y-x^*a_2-y+b_2-c_2-d_2=0,
\\&&
\label{eq:M3}
x^*a_3-x^*b_3+2y+c_3-d_3=0,
\\&&
\label{eq:M4}
x^*a_4-x^*b_4-c_4-d_4=0.
\end{eqnarray}
Recall that $x,y$ are imaginary numbers of modulus one, and $\{a_j,b_j,c_j,d_j\}=\{1,1,-1,-1\}$ for $j=1,2,3,4$ in \eqref{eq:ajbjcjdj}. We shall use these facts in the following arguments of solving \eqref{eq:M1}-\eqref{eq:M4}.

To solve \eqref{eq:M1}, the only possibility is $xy=\pm1$ and $a_1+b_1-c_1-d_1=-4,0$ or $4$. So \eqref{eq:M1} has no solution, and $M_1$ in \eqref{eq:M1=matrix} does not exist. 

To solve \eqref{eq:M2}, the only possibility is $b_2-c_2-d_2=\pm1$. Lemma \ref{le:linearalg} (ii) implies that $x^*y+b_2-c_2-d_2=-x^*a_2-y=0$. So $y=\pm x$ and $a_2=\pm1$. Multiplying the unitary $\diag(1,1,-1)$ on the lhs of $M_2$ in \eqref{eq:M2=matrix}, we may assume $y=x$ and $a_2=1$. So $x=y=\pm i$, $b_2=c_2=-1$ and $d_2=1$. We have obtained that $M_2$ is one of \eqref{eq:row123-1} and \eqref{eq:row123-2}.

To solve \eqref{eq:M3}, the only possibility is $a_3=-b_3$. So $c_3=-d_3$. Eq. \eqref{eq:M3} becomes $x^*a_3+y+c_3=0$.
Multiplying the unitary $\diag(1,1,-1)$ on the lhs of $M_3$ in \eqref{eq:M3=matrix}, we may assume $c_3=1$. Lemma \ref{le:linearalg} (i) implies that $(x^*a_3,y)=(\o,\o^2)$ or $(\o^2,\o)$. We have four solutions for $M_3$ as follows.
\begin{eqnarray}
&& M_{31}=
\bma
1&1& 1&1&1&1 \\
\o^2 &-\o^2& 1&1&-1&-1\\
1& -1& \o^2 & 1 & -\o^2 & -1\\
\ema,
\\&& M_{32}=
\bma
1&1& 1&1&1&1 \\
\o &-\o & 1&1&-1&-1\\
1& -1& \o & 1 & -\o & -1\\
\ema,
\\&& M_{33}=
\bma
1&1& 1&1&1&1 \\
-\o^2&\o^2& 1&1&-1&-1\\
-1& 1& \o^2& 1 & -\o^2& -1\\
\ema,
\\&& M_{34}=
\bma
1&1& 1&1&1&1 \\
-\o &\o & 1&1&-1&-1\\
-1& 1& \o & 1 & -\o & -1\\
\ema.	
\end{eqnarray}
One can show that $M_{31},M_{33}$ are equivalent, and $M_{32},M_{34}$ are also equivalent. We have obtained that $M_3$ is one of \eqref{eq:row123-3} and \eqref{eq:row123-4}.

Finally \eqref{eq:M4} has no solution due to Lemma \ref{le:linearalg} (ii).

(iv) We prove the assertion by contradiction. Let $n=4k+2$. If $H_n$ has three real columns, then up to equivalence we can assume that the first three columns of $H_n$ are $\bma v & v & u \\ v & -v & w \ema$ where $v$ is a $(2k+1)$-dimensional vector of element one, and $u,w$ are both $(2k+1)$-dimensional vectors of element one or minus one. Since the three column vectors are pairwise orthogonal, we obtain that $v$ is orthogonal to $u$ and $w$. It is a contradiction with the fact that $2k+1$ is odd. So $H_n$ having three real columns does not exist. Using the matrix transposition, we can show that $H_n$ having three real rows also does not exist.

(v) We prove the assertion by contradiction. Suppose $H_6$ has a $4\times3$ real submatrix. Up to complex equivalence we may assume that the first three rows of $H_6$ have real entries, except that the $2\times2$ submatrix $M$ in the left upper corner of $H_6$ may have imaginary elements. Lemma \ref{le:properties} (ii.c) implies that $M=\bma x&-x\\y &-y\ema$. So the first two row vectors of $H$ are not orthogonal. It is a contradiction with the fact that $H_6$ is a CHM. 

(vi) The first part of the assertion follows from (v). We prove the second part. Let $a_j$ be the entry in the $j$'th row for $j=1,..,n$ and $n\ge3$. Assertion (ii.c) implies that $a_1^*=\pm a_2=\pm a_3$, and $a_2^*=\pm a_3$. So $a_j=\pm i$ for $j=1,2,3$.

(vii) We prove the assertion for the first matrix by contradiction, and one can similarly prove the assertion for the second matrix. 

Suppose the first three rows of $H_6$ are 
\begin{eqnarray}
\label{eq:r1r2r3}
\bma	
r_1 & r_2 & r_3 & r_4 & i_1 & c_1 \\
r_5 & r_6 & r_7 & r_8 & i_2 & c_2 \\
r_9 & r_{10} & r_{11} & i_3 & r_{12} & c_3 \\
\ema,
\end{eqnarray}
where $r_j$ is real, $i_j$ is imaginary and $c_j$ is complex. It follows from assertion (v) that $c_1^*c_3$ or $c_2^*c_3$ is imaginary. By permuting row $1$ and $2$ of $H_6$ we may assume that $c_2^*c_3$ is imaginary. So row $2$ and $3$ of $H_6$ are not orthogonal. We have a contradiction, and have proven that $H_6$ does not have the first three rows in \eqref{eq:r1r2r3}.

(viii) The assertion follows from assertion (ii.b).
\end{proof}

Finally we review the following fact from \cite[Lemma 11]{Chen2017Product}. It gives the necessary condition by which a $6\times6$ CHM is a member of some MUB trio.
\begin{lemma}
\label{le:mubtrio}
Any MUB trio does not have the $6\times6$ CHM containing a $3\times2$ real submatrix.
\end{lemma}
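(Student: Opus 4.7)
The plan is to argue by contradiction: suppose $V, W, X$ form an MUB trio and $V$ contains a $3\times 2$ real submatrix. My first step is normalization. The MUB property of the four bases $I_6, V, W, X$ is preserved under $B \mapsto P B Q_B$ with a common monomial unitary $P$ on the left and individual monomial unitaries $Q_B$ on the right; choosing $Q_{I_6} = P^{-1}$ keeps $I_6$ fixed while giving us the freedom to replace $V$ by $PVQ_V$ (and correspondingly $W, X$ by $PWQ_W, PXQ_X$). Using this I would bring $H_V := \sqrt 6 \, V$ to dephased form, so that the $3\times 2$ real block lies in rows $1$--$3$ and columns $1$--$2$ and the first row and first column of $H_V$ are all equal to $1$. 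After this the only undetermined entries of the real block are $H_V(2,2), H_V(3,2) \in \{\pm 1\}$.

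Next I would run a case analysis on $(H_V(2,2), H_V(3,2))$. Up to swapping rows $2$ and $3$ this reduces to three cases: (a) $(1,1)$, (b) $(1,-1)$, and (c) $(-1,-1)$. In case (a), column orthogonality forces the second column of $H_V$ to be $(1,1,1,-1,-1,-1)^T$, so $H_V$ has two entirely real columns. In cases (b) and (c) the corresponding column-sum constraints on the bottom half of column $2$ fix $\sum_{j=4}^6 H_V(j,2) \in \{-1, 1\}$, which by Lemma \ref{le:linearalg}(ii) imposes a ``one plus opposite pair'' structure on these three entries. In all three cases I would also exploit Lemma \ref{le:properties}(ii.a)--(ii.d) to constrain rows $2$ and $3$, potentially pinning the first three rows of $H_V$ down to one of the normal forms $H_{61}, \ldots, H_{64}$ or small variants.

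The hardest part will be turning this structural information into a contradiction via the MU conditions with $W$ and $X$. For each column $w$ of $H_W := \sqrt 6 \, W$ the two relations $|c_j^\dagger w| = \sqrt 6$ (for $c_j$ the first two columns of $H_V$) must hold simultaneously. In case (a), writing $\alpha = w_1 + w_2 + w_3$ and $\beta = w_4 + w_5 + w_6$, the parallelogram identity yields $|\alpha|^2 + |\beta|^2 = 6$ and $\mathrm{Re}(\bar\alpha \beta) = 0$ for every column of $H_W$, and the same for every column of $H_X := \sqrt 6 \, X$. I would combine these with the row orthogonality of $H_W$ and $H_X$ and the MU relation between $H_W$ and $H_X$ to overdetermine the system. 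In cases (b) and (c) the explicit normal forms from Lemma \ref{le:properties}(ii.d), together with the forbidden patterns of imaginary entries in Lemma \ref{le:properties}(vi)--(viii), should cut the possibilities to a finite checklist to be eliminated one by one. The main obstacle I anticipate is case (a): because both real columns are aligned on the same three rows, each individual MU constraint is relatively weak, and the contradiction appears to require a global bookkeeping across all six columns of $W$ and $X$ simultaneously rather than a single local argument.
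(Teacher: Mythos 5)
You should first note that the paper does not prove this lemma at all: it is imported verbatim as \cite[Lemma 11]{Chen2017Product} (``we review the following fact''), so there is no internal proof to measure your attempt against, and any argument you give must stand entirely on its own.

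Judged on its own, your proposal has a genuine gap: it is a setup, not a proof. The normalization and the case split on $\bigl(H_V(2,2),H_V(3,2)\bigr)$ are fine, and in case (a) the deduction that the two columns are $c_1=(1,1,1,1,1,1)^T$ and $c_2=(1,1,1,-1,-1,-1)^T$, hence that every column $w$ of $H_W$ and $H_X$ satisfies $\abs{\a}^2+\abs{\b}^2=6$ and $\mathrm{Re}(\bar\a\b)=0$ with $\a=w_1+w_2+w_3$, $\b=w_4+w_5+w_6$, is correct. But these are weak necessary conditions satisfied by a large family of unimodular vectors (e.g.\ any $w$ with $\a=0$ or $\b=0$, or with $\abs{\a}=\abs{\b}=\sqrt3$ and $\bar\a\b$ imaginary), and you give no mechanism for why they cannot hold simultaneously for all twelve columns of $H_W$ and $H_X$ together with the internal orthogonality of $W$ and $X$ and the MU relation between them. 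You acknowledge this yourself (``the contradiction appears to require a global bookkeeping\dots''), which is precisely the content of the lemma; without it nothing is proved. Likewise in cases (b) and (c) the ``finite checklist to be eliminated one by one'' is announced but never produced, and it is not even clear that Lemma \ref{le:properties} (ii.d) applies, since that normal form presupposes rows with \emph{exactly} two imaginary entries, a hypothesis you have not established for $H_V$. A further structural worry: cases (b) and (c) should in fact reduce to case (a) or to one another (e.g.\ negating row 3 of all four matrices preserves the MUB property and flips $H_V(3,2)$ while keeping the block real), so the real difficulty is concentrated in exactly the case you defer. As it stands the proposal identifies the right objects but stops before the argument begins; to complete it you would need either the explicit computation carried out in \cite{Chen2017Product} or an equivalent quantitative obstruction (for instance, summing $\abs{\a}^2$ over the six columns of $H_W$ and using $H_W^\dg H_W=6I_6$ to force $\sum\abs{\a}^2=18$, and then showing the joint constraints with $H_X$ cannot be met), none of which appears here.
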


\section{Main result}
\label{sec:mainresult}

In this section we investigate Question \ref{qu:main}, and introduce the main result of this paper. For this purpose we construct the following definition. 

\begin{definition}
Let $\cR(H_n)$ be the number of real entries of a given $n\times n$ CHM $H_n$, and $\cS_n$ the set of all possible numbers for a given $n$. That is $H_n$ has exactly $\cR(H_n)$ real entries and $n^2-\cR(H_n)$ non-real entries. So
\begin{eqnarray}
\cR(H_n)\in[0,n^2],
\quad\quad
\cS_n\subseteq\{0,1,...,n^2\}.
\end{eqnarray}	
\qed
\end{definition}

To demonstrate the definition, we present the observation on $2\times2$ CHMs.
\begin{lemma}
\label{le:s2}	
$\cS_2=\{0,1,2,4\}.	
$
\end{lemma}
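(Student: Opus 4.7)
My plan is to establish $\cS_2 \supseteq \{0,1,2,4\}$ by explicit construction and $\cS_2 \cap \{3\} = \emptyset$ by a short algebraic argument, using only the defining constraints of a $2\times 2$ CHM.

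First I would write a general $2\times 2$ CHM as $H = \bma a & b \\ c & d \ema$ with $|a|=|b|=|c|=|d|=1$ and orthogonality of rows $a\bar c + b\bar d = 0$. Using $\bar c = 1/c$ and $\bar d = 1/d$, this rearranges to the single scalar relation $d = -bc/a$. Thus any three entries chosen on the unit circle determine the fourth (up to the sign).

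Next I would rule out $\cR(H_2) = 3$. If three of $a,b,c,d$ lie in $\{\pm 1\}$, say $a,b,c$, then $d = -bc/a \in \{\pm 1\}$ as well, giving $\cR(H_2) = 4$; the other three cases follow by symmetry (or by appealing to Lemma \ref{le:properties}(i), which ensures $H_2^T$ is also a CHM so rows and columns play symmetric roles). Hence $3 \notin \cS_2$.

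To realize the remaining four values I would exhibit concrete CHMs: $\bma 1 & 1 \\ 1 & -1 \ema$ gives $\cR = 4$; $\bma 1 & 1 \\ i & -i \ema$ gives $\cR = 2$; $\bma 1 & e^{i\pi/3} \\ e^{i\pi/4} & -e^{7i\pi/12} \ema$ gives $\cR = 1$; and $\bma e^{i\pi/3} & e^{i\pi/4} \\ e^{i\pi/5} & -e^{7i\pi/60} \ema$ gives $\cR = 0$. In each case one verifies $|{\rm entry}|=1$ and the orthogonality $d = -bc/a$ by direct inspection.

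There is no substantial obstacle here: the proof is essentially a one-line algebraic consequence of the orthogonality relation combined with four explicit witnesses. The only thing to be careful about is checking that the chosen phases in the $\cR = 1$ and $\cR = 0$ examples really produce a non-real value of $d$, which is immediate from the angles not being integer multiples of $\pi$.
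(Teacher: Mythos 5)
Your proposal is correct and follows essentially the same route as the paper: exhibit explicit $2\times2$ CHMs realizing $0,1,2,4$ real entries, and rule out exactly $3$ via the row-orthogonality relation with unimodular entries (your observation that $d=-bc/a$ forces the fourth entry into $\{\pm1\}$ whenever three entries are is a slightly more explicit form of the paper's ``the equation has no solution'' step). All four of your witness matrices check out, so no changes are needed.
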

\begin{proof}
We investigate $\cS_2$ by constructing the following $2\times2$ CHMs.
\begin{eqnarray}
H_{21}&=&\bma 1 & 1\\1&-1\ema,
\quad
H_{22}=\bma 1 & i\\1&-i\ema,
\\
H_{23}&=&\bma e^{{\p i\over4}} & e^{{\p i\over4}}i\\1&-i\ema,
\quad	
H_{24}=\bma i& i\\i&-i\ema.
\end{eqnarray}
We have
$
\cR(H_{21})=4,
\cR(H_{22})=2,
\cR(H_{23})=1,
\cR(H_{24})=0.
$
Suppose there is a CHM $
H_2=
\bma
u_{11} & u_{12} \\
u_{21} & u_{22}
\ema$	
satisfying $\cR(H_2)=3$. So three of $u_{11}, u_{12},
u_{21},u_{22}$ are real. They satisfy
$
u_{11}^*u_{12}+u_{21}^*u_{22}=0,
\abs{u_{ij}}=1.
$
The equation has no solution, so $\cR(H_2)\ne3$. So $2\times2$ CHMs may have $0,1,2,$ or $4$ real entries. We have proven
the assertion. 	
\end{proof}

Let $x_{j,k}=e^{{2\p i \over d}jk}$ and $\ket{x_j}=[x_{j,0},x_{j,1},...,x_{j,d-1}]^T\in\bbC^d$. One can verify that the set of $d$ vectors $\ket{x_0},\ket{x_1},...,\ket{x_{d-1}}$ is an orthonormal basis in $\bbC^d$. Hence the matrix $[\ket{x_0},\ket{x_1},...,\ket{x_{d-1}}]$ is a $d\times d$ CHM. Now we present the first main result of this paper. It characterizes $\cS_n$ for general $n$, and explicitly counts the number of real entries of $n\times n$ CHMs when $n=3,4,6$.
\begin{theorem}
\label{thm:s3456}	
(i) For any positive integer $n$ we have $\{0,1,...,n\}\subseteq\cS_n$. 

(ii) For any odd number $n$ we have $\{n+1,n+2,...,2n-1\}\subseteq\cS_n$.

(iii) $\cS_3=\{0,1,2,3,4,5,6\}.	
$

(iv) $\cS_4=\{0-10,12,16\}.	
$

(v) $\cS_6=\{0-22,24,25,26,30\}.	$
\end{theorem}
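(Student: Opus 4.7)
The plan is to establish each part by a two-pronged strategy: explicit construction showing that certain integers lie in $\cS_n$, combined with a structural exclusion argument (leaning on Lemma \ref{le:properties}) to rule out the others.

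For part (i), I would produce, for each $k \in \{0,1,\dots,n\}$, a CHM with exactly $k$ real entries. A convenient template is to start from the Fourier CHM $F_n=[\o_n^{jk}]$ with $\o_n=e^{2\p i/n}$ and premultiply by a diagonal phase matrix $D=\diag(e^{i\t_1},\dots,e^{i\t_n})$: the product $DF_n$ is still a CHM, and only finitely many $\t_j$ make any particular entry of row $j$ real. Choosing $k$ of the $\t_j$'s so that exactly one entry of the corresponding row is real, and the remaining $n-k$ of the $\t_j$'s generic, should deliver exactly $k$ real entries. For part (ii), I would refine this by exploiting the odd-order structure of $F_n$: a single row of $F_n$ with odd $n$ already contains pairs of entries that become simultaneously real under one phase choice, which allows pushing a second (and for larger $k$ a third) real entry into a row while keeping the remaining rows' real counts fixed, covering $\{n+1,\dots,2n-1\}$.

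For parts (iii) and (iv), the lower bounds $\{0,\dots,n\}\subseteq\cS_n$ (and in case (iii) also $\{n+1,\dots,2n-1\}$) follow from (i) and (ii); the remaining listed values are realized by short explicit CHMs (for example, the $3\times 3$ matrix with $\o$ on the diagonal and $1$'s off-diagonal gives $\cR=6$, and the real Hadamard matrix gives $\cR=16$ for $n=4$, with small perturbations covering $\cR=12$). The excluded counts in each case, namely $\{7,8,9\}$ for $n=3$ and $\{11,13,14,15\}$ for $n=4$, are ruled out by a short case analysis based on Lemma \ref{le:properties}: parts (iii) and (iv) of that lemma bound the number of all-real rows and columns, while (ii.a)--(ii.c) restrict how a row adjacent to an all-real row may distribute its non-real entries. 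Each excluded count quickly leads to an arithmetic impossibility.

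Part (v), for $n=6$, is the main obstacle and will require the bulk of the work. The inclusions $\{0,\dots,22,24,25,26,30\}\subseteq \cS_6$ I would settle with an explicit catalogue (Proposition \ref{pp:0-2224252630inCS6}); the value $\cR=30$ in particular can be realized by a CHM of the form $iI_6+C$ where $C$ is a symmetric $6\times 6$ conference matrix with $C^2=5I_6$, yielding $30$ real off-diagonal $\pm 1$ entries and $6$ purely imaginary diagonal entries. The exclusions split into two blocks: the high-end values $\{31,33,34,35,36\}$ (Proposition \ref{pp:3133-36notinCS6}) will follow from the impossibility of having so few non-real entries once Lemma \ref{le:properties}(iv)--(vi) are applied; the remaining excluded counts $\{23,27,28,29,32\}$ require a finer analysis that uses (iv)--(v) to bound the all-real submatrices and (vi)--(viii) together with (ii.d) to pin down the positions and imaginary values of the non-real entries. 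The hardest single step will be ruling out $\cR=23$, which is handled by splitting on a structural parameter $a_1\in\{0,1,2\}$ (the three sub-Lemmas \ref{le:23notincs6-a1=0}, \ref{le:23notincs6-a1=1}, \ref{le:23notincs6-a1=2}); in each branch the problem should reduce to a small system of modulus-one equations solvable via Lemma \ref{le:linearalg}, and accumulating the incompatibilities across the three subcases will yield the exclusion.
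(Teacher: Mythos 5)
Your route for (iii)--(v) is essentially the paper's: explicit catalogues of examples plus structural exclusion through Lemma \ref{le:properties}, with $30$ realized by $iI_6+C$ for a symmetric conference matrix $C$ (this is exactly the matrix $G_6$ of Proposition \ref{pp:0-2224252630inCS6}) and $23$ ruled out by the split on $a_1\in\{0,1,2\}$. The genuine gap is in the elementary parts (i) and (ii), where the mechanism you describe fails. Write $F_n=[e^{2\pi ijk/n}]$ and premultiply by $D=\diag(e^{i\theta_1},\dots,e^{i\theta_n})$. Entry $(j,k)$ of $DF_n$ is real iff $\theta_j+\tfrac{2\pi jk}{n}\in\pi\mathbb{Z}$, and for fixed $j$ the set of such $k$ is either empty or a coset of the subgroup $\{k:\ n\mid 2jk\}$ of $\mathbb{Z}_n$, whose order is $\gcd(2j,n)$. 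If $n$ is even this order is even for every $j$, so every row of $DF_n$ contains an even number of real entries; no choice of row phases gives a row exactly one real entry, and your template for (i) cannot reach any odd count (it cannot even produce $1\in\cS_2$). Dually, if $n$ is odd and $\gcd(j,n)=1$ --- which is every $j\neq 0$ when $n$ is prime, including $n=3$ --- the subgroup is trivial, so a row phase makes at most one entry of row $j$ real: the ``pairs of entries that become simultaneously real under one phase choice'' on which your argument for (ii) rests do not exist, and you cannot get past $n$ real entries this way.

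The paper's constructions repair both points. For (i) it acts on both sides: postmultiplying by $V=1\oplus e^{i}I_{n-1}$ (phase $1$ radian, not a rational multiple of $\pi$) forces every entry outside the first column to be non-real, and then $U=e^{i}I_d\oplus I_{n-d}$ on the left leaves exactly $n-d$ real entries, all in the first column. For (ii) it observes that for odd $n$ the matrix $F_n$ itself has exactly $2n-1$ real entries (the first row and first column) and no purely imaginary entry, so multiplying $d$ of the rows $j\ge 1$ by $i$ destroys exactly $d$ real entries and creates none, giving $2n-1-d$ for $0\le d\le n-1$. With these two constructions substituted, the remainder of your plan proceeds as in the paper.
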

\begin{proof}
(i)
Consider the $n\times n$ CHM $H_n=[x_{j,k}]$ with $x_{j,k}=e^{\frac{2\pi i}{n}jk}$ and $0\leq j,k\leq n-1$. One can verify that $e^ix_{j,k}$ is an imaginary number for any $j,k$. We construct the $n\times n$ diagonal unitary $
U=e^iI_d\oplus I_{n-d},
$ where $0\leq d\leq n$, as well as the 
$n\times n$ diagonal unitary
$
V=1\oplus e^{i}I_{n-1}.
$ One can verify that $UH_nV$ is an $n\times n$ CHM with  $n-d$ real entries, namely the $1$'s in the lower left corner of $UH_nV$. Since $0\leq d\leq n$, we obtain $\left\{0,1,\cdots,n\right\}\subseteq \cS_n$.

(ii) Consider the $n\times n$ CHM $H_n=[x_{j,k}]$ with $x_{j,k}=e^{\frac{2\pi i}{n}jk}$ and $0\leq j,k\leq n-1$. One can verify that $x_{j,k}$ is not a pure imaginary number. Further, $x_{j,k}$ is a real number if and only if $jk=0$. Let the $n\times n$ diagonal unitary $
U=I_{n-d}\oplus iI_d,
$ where $0\leq d\leq n-1$. So $UH_n$ has exactly $2n-d-1$ real elements. We have proven the assertion. 

(iii) Every $3\times3$ CHM can be written as $H_3=D_1 V D_2$ where $D_1$ and $D_2$ are both diagonal unitaries, and $V=\left(
\begin{array}{cccccc}
1 &  1 & 1 \\
1 &  \o & \o^2 \\
1 &  \o^2 & \o \\
\end{array}
\right)$ or $\left(
\begin{array}{cccccc}
1 &  1 & 1 \\
1 &  \o^2  & \o \\
1 &  \o & \o^2  \\
\end{array}
\right)$, where $\o=e^{2\p i\over3}$. Suppose $D_1=\diag(a_1,a_2,a_3)$, $D_2=\diag(b_1,b_2,b_3)$, where $a_i$, $b_i$ have modulus one. Let $V=\left(
\begin{array}{cccccc}
1 &  1 & 1 \\
1 &  \o & \o^2 \\
1 &  \o^2 & \o \\
\end{array}
\right)$. If we respectively choose $a_2=a_3=b_2=b_3=1$ and $D_2=I_3$ then $H_3$ becomes
\begin{eqnarray}
H_{31}&=&\bma a_1b_1 &a_1 &a_1\\b_1 &\omega &\omega^2\\ b_1 &\omega^2 &\omega\ema,
\quad
H_{32}=\bma a_1 &a_1 &a_1\\a_2 &a_2\omega &a_2\omega^2\\a_3 &a_3\omega^2 &a_3\omega\ema.
\end{eqnarray}
In $H_{31}$, let $(a_1,b_1)$ be $(1,1)$, $(1,\omega)$, $(\omega,\omega^2)$ and $(\omega,\omega)$, respectively. We respectively have $5,2,1,0\in\cS_3$. Further if $(a_1,b_1)=(\omega,\omega)$,
then $\o^2 H_{31}$ is a $3\times3$ CHM of six real entries. So $6\in\cS_3$. In $H_{32}$, let $(a_1,a_2,a_3)$ be $(1,i,i)$ and $(1,1,i)$, respectively. Then we have $3,4\in \cS_3$.

Evidently $H_3$ has no three real columns or three real rows, we have $9\notin \cS_3$. Using Lemma \ref{le:properties} (i), we obtain $7,8\notin \cS_3$. We have proven the assertion.

(iv) The $4\times4$ Hadamard matrix exists, say
\begin{eqnarray}
M=
\bma 
1&1&1&1\\
1&-1&1&-1\\
1&1&-1&-1\\
1&-1&-1&1\\
\ema.	
\end{eqnarray}
So $16\in \cS_4$. One can straightforwardly show that $0-10,12,16\in\cS_4$. For example by setting $D_1=\diag(1,i,i,i)$, we obtain that $D_1MD_1^\dg$ has $10$ real entries. By setting $D_2=\diag(i,1,1,1)$ and $D_3=\diag(1,i,i,e^{{\p i\over4}})$, we obtain that $D_2MD_3$ has $7$ real entries.

We prove that $11,13,14,15\not\in\cS_4$ by contradiction. Suppose $N$ is a $4\times4$ CHM containing exactly $11$ real entries. Lemma \ref{le:linearalg} (ii) shows that $N$ does not have real rows or columns. Up to equivalence 
we have 
$N=
\bma 
i_1 & * & * & *\\
* & i_2 & * & *\\
* & * & i_3 & i_4\\
* & * & * & i_5\\
\ema,
$
where $i_j$ is imaginary, and $*$ is a $1$ or $-1$. Column $1$ and $4$ of $N$ gives a contradiction with Lemma \ref{le:linearalg} (ii). So $11\not\in\cS_4$. One can similarly show that $13,14,15\not\in\cS_4$.

(v) The assertion follows from Proposition \ref{pp:0-2224252630inCS6} and \ref{pp:3133-36notinCS6} in Sec. \ref{sec:6x6proof}. In particular Proposition \ref{pp:3133-36notinCS6}  follows from Lemma \ref{le:23notincs6-a1=0}, \ref{le:23notincs6-a1=1}, and \ref{le:23notincs6-a1=2}.
\end{proof}

As an application of Theorem \ref{thm:s3456}, we present Theorem \ref{thm:app} as the second main result as follows.

\begin{theorem}
\label{thm:app}	
Any member of an MUB trio has at most $22$ real elements. 
\end{theorem}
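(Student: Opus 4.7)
The plan is to combine Lemma \ref{le:mubtrio} with a simple double counting argument and then appeal to the classification of $\cS_6$ in Theorem \ref{thm:s3456}~(v). Let $V$ be a member of an MUB trio; since rescaling by a positive real number preserves which entries are real, it suffices to bound $\cR(H)$ for the $6\times 6$ CHM $H=\sqrt{6}\,V$. By Lemma \ref{le:mubtrio}, $H$ contains no $3\times 2$ real submatrix. Writing $R_j\subseteq\{1,\ldots,6\}$ for the set of rows where column $j$ of $H$ is real, and $r_j=|R_j|$, this constraint is precisely
\begin{equation}
\abs{R_{j_1}\cap R_{j_2}}\le 2 \qquad\text{for all } 1\le j_1<j_2\le 6.
\end{equation}
Since $\cR(H)=\sum_{j=1}^{6}r_j$ and $\cS_6=\{0,1,\ldots,22,24,25,26,30\}$, the goal reduces to ruling out the values $24,25,26,30$.

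First I would dispatch the case $\max_j r_j\ge 5$. Taking $r_1\ge 5$ without loss of generality, each $j\ge 2$ satisfies $\abs{R_j\cap R_1}\le 2$ and $\abs{R_j\setminus R_1}\le 6-r_1\le 1$, so $r_j\le 3$. This yields the crude bound $\cR(H)\le r_1+5\cdot 3\le 21$, which already excludes $25,26,30$ in this branch.

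It then remains to exclude $\cR(H)=24$ under $r_j\le 4$ for every $j$. In that regime $\sum_j r_j=24$ forces $r_j=4$ for all six columns. Letting $s_i$ denote the number of real entries in row $i$, so that $\sum_i s_i=24$ as well, I would double count
\begin{equation}
N=\bigl|\{(i,j_1,j_2):1\le j_1<j_2\le 6,\ i\in R_{j_1}\cap R_{j_2}\}\bigr|.
\end{equation}
Counting by column pairs and invoking Lemma \ref{le:mubtrio} gives $N=\sum_{j_1<j_2}\abs{R_{j_1}\cap R_{j_2}}\le 2\binom{6}{2}=30$. Counting by rows gives $N=\sum_{i=1}^{6}\binom{s_i}{2}$, and by convexity of $\binom{\cdot}{2}$ together with $\sum_i s_i=24$ one has $\sum_i s_i^2\ge 24^2/6=96$, hence $\sum_i\binom{s_i}{2}\ge(96-24)/2=36$. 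The contradiction $36\le 30$ rules out $\cR(H)=24$, completing the proof that $\cR(H)\le 22$.

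The argument is almost entirely combinatorial once Lemma \ref{le:mubtrio} is invoked, so I do not foresee a serious obstacle. The only ingredient requiring care is the double count, where one must verify that the convex lower bound on $\sum_i\binom{s_i}{2}$ strictly exceeds the column-pair upper bound $2\binom{6}{2}$; the fairly tight gap $36$ vs $30$ is consistent with $22$ being the sharp threshold given by Theorem \ref{thm:s3456}~(v).
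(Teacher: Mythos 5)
Your proof is correct, and it rests on exactly the same two pillars as the paper's: the classification $\cS_6=\{0\text{--}22,24,25,26,30\}$ from Theorem \ref{thm:s3456}~(v), and the exclusion of $3\times2$ real submatrices from Lemma \ref{le:mubtrio}. Where you differ is in how the values $24,25,26,30$ are eliminated. The paper argues by pigeonhole on columns: for $n=25,26,30$ the two columns with fewest imaginary entries carry at most three between them, forcing a $3\times2$ real submatrix, and for $n=24$ it falls back on an ad hoc placement analysis of the twelve imaginary entries. You instead encode Lemma \ref{le:mubtrio} uniformly as $\abs{R_{j_1}\cap R_{j_2}}\le 2$ and run a single Fisher-type double count: the branch $\max_j r_j\ge 5$ gives $\cR(H)\le 21$ outright, and in the branch $r_j\le 4$ the value $24$ forces all $r_j=4$, whence $\sum_i\binom{s_i}{2}\ge 36>30\ge\sum_{j_1<j_2}\abs{R_{j_1}\cap R_{j_2}}$ --- a contradiction. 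All steps check out (the convexity bound $\sum_i s_i^2\ge 24^2/6=96$ and the column-pair bound $2\binom{6}{2}=30$ are both right). Your version buys a cleaner and fully rigorous treatment of the delicate $n=24$ case, which in the paper is handled somewhat informally; the paper's version buys nothing extra here, though its pigeonhole step is marginally shorter for $n=25,26,30$. Either way the conclusion $\cR(H)\le 22$ follows because $23\notin\cS_6$.
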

\begin{proof}
Let $M_n$ be a member of an MUB trio having exactly $n$ real entries. It follows from Theorem \ref{thm:s3456} that $n\in\cS_6=\{0-22,24,25,26,30\}$. Suppose $n=25,26$ or $30$. So $M_n$ has two columns containing at most three imaginary entries. It implies that the two columns has a $3\times2$ real submatrix. So $M_{30}$ has been excluded as a member of any MUB trio by Lemma \ref{le:mubtrio}. Suppose $n=24$. Using the previous argument for $n=25,26$ or $30$, up to equivalence we may assume that  
column $2k-1,2k$ of $M_{24}=[m_{ij}],i,j=1,2,...,6$ has exactly four imaginary entries, for $k=1,2,3$ respectively. The previous argument we can assume that $m_{11},m_{21},m_{32},m_{42},m_{53},m_{63}$ are all imaginary entries. So there is an integer $j\in\{1,2,3\}$ such that column $j,4$ of $M_{24}$ has a $3\times2$ real submatrix. So $M_{24}$ is excluded by Lemma \ref{le:mubtrio} again. 
The assertion holds. 	
\end{proof}

\section{Proof of Theorem \ref{thm:s3456} (v)}
\label{sec:6x6proof}

We begin by characterizing the elements belonging to $\cS_6$.

\begin{proposition}
\label{pp:0-2224252630inCS6}
$0-22,24,25,26,30\in\cS_6$.
\end{proposition}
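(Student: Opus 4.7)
The plan is to prove Proposition \ref{pp:0-2224252630inCS6} by explicit construction: for each $n$ in the list $\{0,1,\dots,22\}\cup\{24,25,26,30\}$ I exhibit a concrete $6\times6$ CHM having exactly $n$ real entries, then verify (i) orthogonality of rows, (ii) unit modulus of entries, and (iii) the exact count of real entries. Since the membership $\{0,1,2,3,4,5,6\}\subseteq\cS_6$ is already an immediate consequence of Theorem \ref{thm:s3456}(i) applied with $n=6$, I only need to supply constructions for the remaining twenty values.

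For the intermediate block $n\in\{7,\dots,22\}$, I would work from a small library of known $6\times6$ CHMs, in particular the Fourier matrix $F_6$ whose entries are sixth roots of unity, the tensor product $F_2\otimes F_3$, and generic members of the Dita family. Each of these has a well-understood distribution of phases: an entry of $F_6$ is real exactly when its associated sixth-root exponent lies in $3\bbZ$. By pre- and post-multiplying such a matrix by diagonal unitaries $D_1,D_2$ with carefully chosen phases (for instance $e^{i\pi/3}$, $e^{i\pi/2}$, $\o$, or $i$), I can selectively shift previously non-real entries into $\pm1$ and vice versa, column- or row-at-a-time. Because complex equivalence preserves the CHM property, this sweep generates a sequence of $6\times6$ CHMs whose real-entry counts take on many values; by interleaving two or three such base matrices and by combining with the ``staircase'' construction already used in the proof of Theorem \ref{thm:s3456}(i) (which turns $d$ of the diagonal $1$'s into $e^i$), I can hit every integer from $7$ up to $22$.

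The hard part is the high end $n\in\{24,25,26,30\}$, because here the few non-real entries must be positioned very rigidly in order to preserve row orthogonality. For $n=30$ only six non-real entries are allowed, and by Lemma \ref{le:properties}(iv) they cannot all sit in a single row or pair of rows; a natural template is to place them as a $\pm i$-permutation pattern on top of a near-real skeleton so that each row and column loses only one real entry, and then check that a real $\pm1$ filling completing the pattern exists and is orthogonal—this essentially reduces to constructing a suitable signed-permutation structure. The cases $n=24,25,26$ can be obtained by locally perturbing the $n=30$ example (replacing one or two $\pm1$ entries by a unit complex number, with a compensating phase rotation on another entry to keep orthogonality), and by taking a diagonal twist of $F_2\otimes F_3$ to obtain $n=24$.

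The main obstacle I anticipate is not the verification of orthogonality for any single construction, but rather the bookkeeping: for each candidate matrix one must rule out that additional entries are \emph{accidentally} real (which would shift $n$ upwards and leave a gap in the list), and one must confirm that every target value is realized. The cleanest presentation is therefore a table of matrices indexed by $n\in\{7,\dots,22,24,25,26,30\}$, with each row of the table displaying the CHM (or the pair $(D_1,D_2)$ applied to a named base matrix), a short justification of the CHM property, and an explicit tally of its real entries. The exceptional value $30$ deserves its own separate construction, as it sits at the extreme allowed by Lemma \ref{le:properties}(iv) and (v) and admits the least freedom.
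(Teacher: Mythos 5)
Your overall strategy---start from explicit $6\times6$ CHMs and twist them by diagonal unitaries to tune the real-entry count---is the same one the paper uses, but as written your argument has two genuine gaps. First, the entire content of the proposition is the explicit coverage of every value in $\{0,\dots,22\}\cup\{24,25,26,30\}$, and you never actually produce it: ``by interleaving two or three such base matrices \dots I can hit every integer from $7$ up to $22$'' is an assertion, not a proof, and you yourself flag the bookkeeping as the main obstacle without resolving it. The paper closes this gap with a single matrix and a single counting observation: it takes $G_6=iI_6+S$ (exactly the signed-permutation-over-real-skeleton template you describe for $n=30$, realized by a symmetric conference matrix), notes that multiplying any one row by $i$ converts its five real entries to imaginary ones and its $i$ to $-1$ (net loss $4$), while multiplying by $e^{i\pi/4}$ makes the whole row non-real (net loss $5$), and then observes that applying these to $k$ distinct rows realizes every loss in $\{4k,\dots,5k\}$; the union over $k=0,\dots,6$ covers everything except $19$ and $24$, which are obtained by additionally multiplying one or two columns by $i$. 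No library of base matrices is needed.

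Second, your proposed route to $n=24,25,26$ --- ``locally perturbing the $n=30$ example (replacing one or two $\pm1$ entries by a unit complex number, with a compensating phase rotation on another entry)'' --- does not work. Changing a single entry $m_{jk}$ of a CHM disturbs the orthogonality of row $j$ with all five other rows (and of column $k$ with all five other columns) simultaneously, so no single compensating entry can restore the CHM property; the only perturbations guaranteed to preserve it are multiplications of entire rows or columns by unimodular scalars (i.e.\ complex equivalence, Lemma \ref{le:properties}(i)). That is precisely the move the paper makes: $26$ and $25$ come from multiplying one row of $G_6$ by $i$ or $e^{i\pi/4}$, and $24$ from multiplying the first row and the second column by $i$. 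If you replace your local-perturbation step by whole-row/column phase multiplications and then carry out the explicit count, your outline becomes the paper's proof.
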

\begin{proof}
Consider the order-six CHM
\begin{eqnarray}
G_6=
\bma
i & 1 & 1 & 1 & 1 & 1\\
1 & i & -1 & -1 & 1 & 1\\
1 & -1 & i & 1 & 1 & -1\\
1 & -1 & 1 & i & -1 & 1\\
1 & 1 & 1 & -1 & i & -1\\
1 & 1 & -1 & 1 & -1 & i\\
\ema.	
\end{eqnarray}
One can show that $\cR(G_6)=30$. Using the matrices complex equivalent to $G_6$, one can construct matrices having $0-22,24-26,30$ real entries, respectively. The idea is as follows. If the first row of $G_6$ is multiplied by $i$ or $e^{{\p\over4}i}$ then the resulting matrix has $30-4=26$ or $30-5=25$ imaginary entries, respectively. We may repeat this argument by multiplying $i$ or $e^{{\p\over4}i}$ to row $2,3,4,5,6$ of $G_6$, respectively, and thus reduce the number of imaginary entries by $4$ or $5$. So we can construct CHMs containing exactly $m$ real entries with the flow of $m$ as follows.
\begin{eqnarray}
\label{eq:m=30->25-26}
&&
m=30\ra 25-26\ra20-22\ra15-18
\notag\\&&
\ra10-14\ra5-10\ra0-6.	
\end{eqnarray}
On the other hand, if we multiply the first row and second column of $M$ by $i$, respectively then the resulting CHM has exactly $24$ real entries. Further, if we multiply the first row, the second and third columns of $M$ by $i$, respectively, and the last row of $M$ by $e^{{\p\over4}i}$, then the resulting CHM has exactly $19$ real entries. Combining these results and \eqref{eq:m=30->25-26}, we obtain the assertion.
\end{proof}

Using Proposition \ref{pp:0-2224252630inCS6} for obtaining Theorem \ref{thm:app}, we need to show that the integers $23,27-29,31-36\not\in\cS_6$. For this purpose we define the \textit{imaginary array} of an order-$6$ CHM $M$ as follows. Let $a_i$ be the number of imaginary entries in the $i$'th row of $M$. Up to equivalence we may assume that 
\begin{eqnarray}
\label{eq:0lea1lea2}
0\le a_1\le a_2\le a_3\le a_4\le a_5\le a_6\le 6.	
\end{eqnarray}
Then the array $[a_1,a_2,a_3,a_4,a_5,a_6]$ is the imaginary array of $M$. Evidently, the sum of $a_i$'s is exactly the number of imaginary entries of $M$. Now we are in a position to show $23,27-29,31-36\not\in\cS_6$. 
\begin{proposition}
\label{pp:3133-36notinCS6}
(i) $31,33,34,35,36\not\in\cS_6$.	

(ii) $27,28,29,32\not\in\cS_6$.

(iii) $23\not\in\cS_6$.
\end{proposition}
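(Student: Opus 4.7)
The overall strategy is to partition the problem according to the \emph{imaginary array} $[a_1, \ldots, a_6]$ of the hypothetical $6\times 6$ CHM $H_6$, where $a_i$ is the number of imaginary entries in row $i$ and $a_1 \le a_2 \le \cdots \le a_6$. The total number of imaginary entries is $36 - \cR(H_6)$, and Lemma \ref{le:properties} (iv) forces $a_3 \ge 1$ (at most two all-real rows). For each excluded value of $\cR(H_6)$ I would enumerate the feasible arrays satisfying $\sum a_i = 36-\cR(H_6)$ and $1 \le a_3 \le \cdots \le a_6 \le 6$, and then eliminate each.

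For part (i), the arguments should be short. The equation $\sum a_i \in \{0,1,2,3,5\}$ admits no array with $a_3 \ge 1$ (the minimum admissible sum is $4$, attained by $[0,0,1,1,1,1]$), so Lemma \ref{le:properties} (iv) immediately rules out $n \in \{33,34,35,36\}$. For $n = 31$, the only admissible arrays are $[0,0,1,1,1,2]$ and $[0,1,1,1,1,1]$; each has a zero row and a one-row, so Lemma \ref{le:properties} (ii.a) applies.

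For part (ii), the same enumeration yields the unique array $[0,0,1,1,1,1]$ for $n=32$ (killed by (ii.a)), and a handful of arrays for $n \in \{27,28,29\}$. Most are killed immediately by (ii.a) or (ii.b) because they contain a zero row paired with a one-row or a three-row; the only array with a zero row that escapes is $[0,0,2,2,2,2]$ for $n=28$, which I would dispatch by applying Lemma \ref{le:properties} (ii.c) to describe each of the four two-imaginary rows as $(\ldots, x_k, \pm x_k, \ldots)$ and summing over the six columns to force a contradiction. The arrays with no zero row -- notably $[1,1,1,1,1,2]$ for $n=29$ and $[1,1,1,1,1,3]$, $[1,1,1,1,2,2]$ for $n=28$ -- I would handle by invoking Lemma \ref{le:properties} (vi): the five singleton-imaginary rows place their $\pm i$ entries in five distinct columns, yielding a near-diagonal structure of $\pm i$. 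The induced column imaginary array must also obey Lemma \ref{le:properties} (iv), and orthogonality of a pair of columns (each carrying a single $\pm i$ on the diagonal and real entries elsewhere) splits into real and imaginary parts that cannot simultaneously vanish.

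Part (iii) is the deepest, and I would follow the author's split based on $a_1$. Since $\sum a_i = 13$ forces $a_1 \le 2$, three sublemmas cover $a_1 = 0, 1, 2$. The $a_1 = 2$ case uniquely fixes the array $[2,2,2,2,2,3]$; Lemma \ref{le:properties} (ii.c), (ii.d) together with (iii) strongly constrain the allowed row structures up to equivalence, and orthogonality among pairs of two-imaginary rows gives the contradiction. The $a_1 = 0$ case uses the all-real row to invoke the full strength of (ii.a)--(ii.d), and the resulting three-row matrices $H_{61},\ldots,H_{64}$ can be extended to the remaining three rows only in finitely many ways, each contradicting column orthogonality. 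The $a_1 = 1$ case -- where every row has an imaginary entry, so none of (ii.a)--(ii.d) applies -- is the main obstacle: the admissible arrays (partitions of $13$ into six parts each between $1$ and $6$) are numerous, and each must be ruled out by combining Lemma \ref{le:properties} (vi) (which pins any singleton-imaginary row to a $\pm i$ in a distinct column) with the forbidden submatrix patterns of (vii) and (viii), propagating contradictions through the resulting mixed real/imaginary orthogonality relations.
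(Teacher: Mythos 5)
Your overall architecture coincides with the paper's: classify by the imaginary array, kill most arrays with Lemma \ref{le:properties}, and split the $n=23$ case into $a_1=0,1,2$. Part (i) and the easy arrays of part (ii) are handled correctly, and you deserve credit for flagging the array $[0,0,2,2,2,2]$ for $n=28$, which Lemma \ref{le:properties} (ii.a) does \emph{not} exclude even though the paper's proof lists only $[1,1,1,1,2,2]$ and $[1,1,1,1,1,3]$ at that point. However, your proposed disposal of it (``summing over the six columns'') is not an argument, and a concrete step you do rely on is wrong: for two columns each carrying a single $\pm i$ in \emph{distinct} rows and real entries elsewhere, the inner product splits into four real terms, each $\pm1$, plus two purely imaginary terms, each $\pm i$; both groups can cancel separately (two $+1$'s against two $-1$'s, and $+i$ against $-i$), so the real and imaginary parts \emph{can} simultaneously vanish. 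Hence your treatment of $[1,1,1,1,1,2]$, $[1,1,1,1,1,3]$ and $[1,1,1,1,2,2]$ does not close; the paper has to combine (vi) with the forbidden $2\times6$ pattern of (ii.b) and with a column-array analysis to finish these cases.

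The larger gap is part (iii). For $a_1=1$ you offer only a statement of intent. The paper's Lemma \ref{le:23notincs6-a1=1} first proves, via Lemma \ref{le:properties} (v) and the forbidden submatrix in (ii.b), that a row with exactly one imaginary entry and a row with exactly two cannot both occur among rows $2$--$6$; this cuts the admissible arrays down to exactly five, namely $[1,1,1,1,3,6]$, $[1,1,1,1,4,5]$, $[1,2,2,2,2,4]$, $[1,2,2,2,3,3]$, $[1,1,1,3,3,4]$, and the last of these alone requires six subcases $(a,b)=(1,1),\dots,(3,3)$ on a $2\times3$ corner block. None of these reductions, nor the analogous key steps for $a_1=0$ (reduction to the array $[0,2,2,2,2,5]$ and then (ii.d) pinning the first three rows to one of $H_{61}$--$H_{64}$) and $a_1=2$ (the two relative positions of the $3$-imaginary row and $3$-imaginary column, then the three shapes of the $2\times2$ block $N$), appear in your proposal. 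As it stands this is a plan that names the right tools, but the case enumeration and the orthogonality computations that constitute the actual proof are missing, and at least one of the shortcuts you propose in their place fails.
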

\begin{proof}
(i) The assertion follows from Lemma \ref{le:properties} (ii.a) and the known fact that order-six real Hadamard matrix does not exist.

(ii) Suppose $32\in\cS_6$. Using Lemma \ref{le:properties} (ii.a), the CHM $M$ exists only if the four imaginary entries form a $2\times2$ submatrix of $M$. Lemma \ref{le:properties} (iv) implies that $M$ does not exist. So $32\not\in\cS_6$. In the following we assume that $a_i$ is the number of imaginary entries in the $i$'th row of $M$. If $M$ has exactly $29$ real entries then  the imaginary array $[a_1,a_2,a_3,a_4,a_5,a_6]=[1,1,1,1,1,2]$ up to equivalence by Lemma \ref{le:properties} (ii.a) and (iv). It is a contradiction with Lemma \ref{le:properties} (ii.b) and (vi). So $29\not\in\cS_6$.

Next we assume $28\in\cS_6$, and the CHM $M$ has exactly $28$ real entries. Lemma \ref{le:properties} (ii.a) implies that the imaginary array $[a_1,a_2,a_3,a_4,a_5,a_6]=[1,1,1,1,2,2]$ or $[1,1,1,1,1,3]$. Lemma \ref{le:properties} (vi) shows that the imaginary entries are all $i$ or $-i$, and the imaginary entries in the first four or five rows of $M$ are in distinct columns. The former is excluded by Lemma \ref{le:properties} (ii.b). The latter is excluded by the non-orthogonality of a column vector containing two imaginary entries, and the last column vector of $M$. 
So $28\not\in\cS_6$.

Finally we assume $27\in\cS_6$, and the CHM $M$ has exactly $27$ real entries. Lemma \ref{le:properties} (ii.b) shows that $a_j\ge1$ for any $j$. So the imaginary array of $M$ is $[a_1,a_2,a_3,a_4,a_5,a_6]=[1,1,1,2,2,2]$,  $[1,1,1,1,2,3]$ or $[1,1,1,1,1,4]$. They are all excluded by Lemma \ref{le:properties} (ii.b), since in each case $M$ has two rows or two columns having one and two imaginary entries, respectively.

(iii) We prove the assertion by contradiction. Suppose $M$ is a $6\times6$ CHM having exactly $23$ real entries. That is, $M$ has exactly $13$ imaginary entries. So the imaginary array $[a_1,a_2,a_3,a_4,a_5,a_6]$ of $M$ satisfies
\begin{eqnarray}
\label{eq:sumai=13}	
\sum^6_{i=1}a_i=13. 
\end{eqnarray}
Hence we have three subcases, namely $a_1=0,1$, or $2$. In either case we show that $M$ does not exist. We shall provide their proofs in the subsequent Lemma \ref{le:23notincs6-a1=0}, \ref{le:23notincs6-a1=1}, and \ref{le:23notincs6-a1=2}, respectively.  
\end{proof}

\begin{lemma}
\label{le:23notincs6-a1=0}  	
The $6\times6$ CHM containing exactly 23 real entries does not exist, if it has a row containing no imaginary entry. 
\end{lemma}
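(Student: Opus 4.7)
The plan is to derive a contradiction by a parity argument based on orthogonality with an all-real row. First I would normalize the hypothesized CHM $M$: since $a_1=0$ means the first row of $M$ consists entirely of real entries of modulus one, every entry of row $1$ lies in $\{+1,-1\}$. Multiplying $M$ on the right by the diagonal matrix $\diag(m_{11},m_{12},\ldots,m_{16})\in\cM_6$, whose diagonal entries are all $\pm 1$, transforms row $1$ into $(1,1,1,1,1,1)$. Because this rescaling matrix is a $\pm 1$-diagonal, it sends real entries to real entries and imaginary entries to imaginary entries column by column, so it preserves both the imaginary array $[a_1,\ldots,a_6]$ and the total count $\cR(M)=23$. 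In particular there is no loss in assuming the normalized form.

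The second step is to exploit orthogonality against the normalized first row. For each $k\ge 2$ one has $\sum_{j=1}^{6}m_{kj}=0$. The entries of row $k$ split into $a_k$ purely imaginary entries (each $\pm i$) and $6-a_k$ real entries (each $\pm 1$), so the real part of this sum equals the sum of the $6-a_k$ real entries, which must vanish. A signed sum of $\pm 1$'s can equal zero only if the number of summands is even, so $6-a_k$, and hence $a_k$, is even. This conclusion holds for every $k\in\{2,3,4,5,6\}$, and is strictly stronger than what Lemma~\ref{le:properties}(ii.a), (ii.b) give in this situation.

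Finally, $\cR(M)=23$ forces $\sum_{k=1}^{6}a_k=36-23=13$, so $a_2+a_3+a_4+a_5+a_6=13$ because $a_1=0$. This is odd, contradicting the fact that a sum of five even integers is even. The only subtle point is the normalization in the first paragraph, namely verifying that the $\pm 1$ column rescaling preserves the real-entry count; everything after that is elementary orthogonality and a parity count, so I do not anticipate any real obstacle.
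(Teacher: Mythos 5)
There is a genuine gap, and it is fatal to the argument: you assert that the $a_k$ ``imaginary'' entries of row $k$ are each equal to $\pm i$, so that they contribute nothing to the real part of $\sum_j m_{kj}$. In this paper an \emph{imaginary entry} means a non-real entry of modulus one, i.e.\ any $e^{i\theta}$ with $\theta\notin\{0,\pi\}$; this is forced by the bookkeeping $\sum_i a_i=36-\cR(M)=13$, which counts all non-real entries. Such an entry generally has nonzero real part, so the real part of the orthogonality relation reads $\sum_{\mathrm{real}} m_{kj}+\sum_{\mathrm{non\text{-}real}}\mathrm{Re}(m_{kj})=0$, and the second sum need not vanish; no parity conclusion about $6-a_k$ follows. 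A concrete counterexample to your claim that every $a_k$ must be even: the unimodular row $(1,\omega,\omega^2,i,i\omega,i\omega^2)$ with $\omega=e^{2\pi i/3}$ is orthogonal to $(1,1,1,1,1,1)$ and has five non-real entries, so $a_k=5$ and $6-a_k=1$ is odd.

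The paper's own proof makes clear why your conclusion cannot hold: after excluding $a_k=1$ by Lemma~\ref{le:properties} (ii.a) and $a_k=3$ via Lemma~\ref{le:linearalg} (ii), the oddness of $\sum_i a_i=13$ forces some $a_k=5$, and the bulk of the proof is the elimination of the imaginary array $[0,2,2,2,2,5]$ by reconstructing the first five rows explicitly through Lemma~\ref{le:properties} (ii.d) and deriving an orthogonality contradiction in the last row. Your normalization step (right-multiplication by a $\pm1$ diagonal, preserving real and non-real entries) is correct and is implicitly used in the paper as well, but the case $a_k=5$ cannot be dispatched by parity; it requires the detailed structural analysis.
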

\begin{proof}
We shall follow the notation in the proof of Proposition \ref{pp:3133-36notinCS6} (iii). We have $a_1=0$, namely the first row of $M$ has no imaginary entries. We have $a_i\ne 1$ for $i=2,3,4,5,6$ by Lemma \ref{le:properties} (ii). Next we have $a_i\ne 3$ by Lemma \ref{le:linearalg} (ii). Eqs. \eqref{eq:sumai=13} and \eqref{eq:0lea1lea2} imply that some $a_i=5$. If $a_6=6$ then \eqref{eq:sumai=13} and \eqref{eq:0lea1lea2} imply that $a_2=a_3=0$ and $a_4=2$. So $M$ has three real rows, and it is a contradiction with Lemma \ref{le:properties} (iv). Hence $a_6<6$, and we have $a_6=5$. Eqs. \eqref{eq:sumai=13} and \eqref{eq:0lea1lea2} imply that $a_i=0,2,4$ for $i<6$. So there are three cases, namely $(a_1,a_2,a_3,a_4,a_5)=(0,0,0,4,4),(0,0,2,2,4)$ and $(0,2,2,2,2)$. The first case is excluded by Lemma \ref{le:properties} (iv). The second case implies that the first three rows of $M$ are 
$\bma 
1&1&1&1&1&1\\
1&1&1&-1&-1&-1\\
a&b&c&d&e&f\\
\ema$. Since they are pairwise orthogonal, Lemma \ref{le:linearalg} (i) implies that two of $a,b,c$ are imaginary, and two of $d,e,f$ are also imaginary. So $a_3\ge4$, and it is a contradiction with the fact that $a_3=2$. We have excluded the second case. 

It remains to investigate the third case, namely the order-$6$ CHM $M$ has the imaginary array
\begin{eqnarray}
\label{eq:022225}	
[a_1,a_2,a_3,a_4,a_5,a_6]=[0,2,2,2,2,5].
\end{eqnarray}
Using Lemma \ref{le:properties} (ii.d), we may assume that the first three rows of $M$ form one of the four matrices $H_{61}-H_{64}$ in \eqref{eq:row123-1}-\eqref{eq:row123-4}. Since $H_{61}=H_{62}^*$ and $H_{63}=H_{64}^*$, it suffices to show that the first three rows of $M$ cannot be $H_{61}$ in \eqref{eq:row123-1} or $H_{63}$ in \eqref{eq:row123-3}. We prove it by contradiction. Using \eqref{eq:022225} and equivalence, we can permute the row $2-5$ of $m$ so that they still have exactly two imaginary entries. Applying the pigeonhole principle to row $2-5$ of $M$, one can show that the first three rows of $M$ form the matrix $H_{61}$ namely
\begin{eqnarray}
\label{eq:row123-11}
&&
H_{61}=
\bma
1&1& 1&1&1&1 \\
i&-i& 1&1&-1&-1\\
i& 1 & -i& -1& -1& 1\\
\ema.
\end{eqnarray}
Using Lemma \ref{le:properties} (ii.d) and \eqref{eq:022225}, we obtain that row $4$ and $5$ of $M$ both have exactly two imaginary entries, and exactly one of them is in the first three columns of $M$. Lemma \ref{le:properties} (ii.b) implies that the entries are $i,-i$. Using \eqref{eq:row123-11}, the first four rows of $M$ is one of the following two matrices.
\begin{eqnarray}
\label{eq:row123-11}
&&
H_{611}=
\bma
1&1& 1&1&1&1 \\
i&-i& 1&1&-1&-1\\
i& 1 & -i& -1& -1& 1\\
i& a_1 & b_1 & -i& c_1& d_1\\
\ema,
\\&&
H_{612}=
\bma
1&1& 1&1&1&1 \\
i&-i& 1&1&-1&-1\\
i& 1 & -i& -1& -1& 1\\
a_2 & i & -i & b_2 & c_2& d_2\\
\ema,
\end{eqnarray}
where $\{a_j,b_j,c_j,d_j\}=\{1,1,-1,-1\}$ for $j=1,2$. Using the same argument, only $H_{611}$ may have row $5$ containing two imaginary elements, and orthogonal to row $1-4$. We may assume that the first five rows of $M$ are 
\begin{eqnarray}
H_{6111}=
\bma
1&1& 1&1&1&1 \\
i&-i& 1&1&-1&-1\\
i& 1 & -i& -1& -1& 1\\
i& a_1 & b_1 & -i& c_1& d_1\\
i& a_3 & b_3 & c_3& -i & d_3\\
\ema.
\end{eqnarray}
The orthogonality between row $2,3,4$ implies $(a_1,b_1,c_1,d_1)=(1,-1,1,-1)$. Similarly, $(a_3,b_3,c_3,d_3)=(-1,-1,1,1)$. Hence
\begin{eqnarray}
H_{6111}=
\bma
1&1& 1&1&1&1 \\
i&-i& 1&1&-1&-1\\
i& 1 & -i& -1& -1& 1\\
i& 1 & -1 & -i& 1& -1\\
i& -1 & -1 & 1& -i & 1\\
\ema.
\end{eqnarray}
Since the first five row vectors of $M$ are orthogonal to the last row, using Lemma \ref{le:properties} (ii.b) we have
\begin{eqnarray}
M=
\bma
1&1& 1&1&1&1 \\
i&-i& 1&1&-1&-1\\
i& 1 & -i& -1& -1& 1\\
i& 1 & -1 & -i& 1& -1\\
i& -1 & -1 & 1& -i & 1\\
u & x & -x & x & -x & -u\\
\ema.
\end{eqnarray}
So the last row of $M$ does not have exactly five imaginary entries. It is a contradiction with $a_6=5$. We have shown that $a_1=0$ is impossible. 
\end{proof}

\begin{lemma}
\label{le:23notincs6-a1=1}  	
The $6\times6$ CHM containing exactly 23 real entries does not exist, if it has a row containing exactly one imaginary entry. 
\end{lemma}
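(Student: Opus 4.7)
The plan is to follow the case-analysis framework used in Lemma \ref{le:23notincs6-a1=0}. Adopting the notation of Proposition \ref{pp:3133-36notinCS6}, I assume $a_1=1$ and $\sum_{i=1}^{6} a_i = 13$ with $1\le a_1\le a_2\le\cdots\le a_6\le 6$, aiming to derive a contradiction.

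First I would enumerate the compatible imaginary arrays. A direct combinatorial check yields eleven candidates, which I group by $a_6$: two with $a_6=6$, namely $[1,1,1,1,3,6]$ and $[1,1,1,2,2,6]$; three with $a_6=5$, namely $[1,1,1,1,4,5]$, $[1,1,1,2,3,5]$, and $[1,1,2,2,2,5]$; four with $a_6=4$, namely $[1,1,1,2,4,4]$, $[1,1,1,3,3,4]$, $[1,1,2,2,3,4]$, and $[1,2,2,2,2,4]$; and two with $a_6=3$, namely $[1,1,2,3,3,3]$ and $[1,2,2,2,3,3]$. Each of the eleven arrays must then be excluded in turn.

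A key recurring tool exploits the fact that any imaginary entry of a CHM has modulus one and zero real part, hence equals $\pm i$: multiplying a row by $-i$ yields a complex-equivalent CHM in which the real and imaginary entries of that row are swapped. Applied to a row with $a_k=6$ this produces an all-real row, and Lemma \ref{le:properties} (ii.a) then forbids the surviving rows with a single imaginary entry, dispatching $[1,1,1,1,3,6]$ and $[1,1,1,2,2,6]$ immediately. For $a_k=5$ the transformed row carries exactly one imaginary entry, and Lemma \ref{le:properties} (vi) applies to the now five rows with a single imaginary entry, forcing these entries into five distinct columns and leaving the sixth column with a tightly restricted structure; combining with column-side applications of Lemma \ref{le:properties} (ii.a) or (iv) to $M^T$ produces a contradiction for the three arrays with $a_6=5$.

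For the remaining six arrays, no row is extremal and the phase-multiplication shortcut cannot immediately eliminate a row. Here I would first apply Lemma \ref{le:properties} (ii.d) to a pair of rows each carrying exactly two imaginary entries to normalize the first three rows of $M$ as one of $H_{61}$--$H_{64}$ from \eqref{eq:row123-1}--\eqref{eq:row123-4}, and then extend row by row, mirroring the $H_{611}$ and $H_{6111}$ computation in the proof of Lemma \ref{le:23notincs6-a1=0}. At each extension step, Lemma \ref{le:properties} (ii.b), (vii), (viii) together with Lemma \ref{le:linearalg} will constrain the admissible entries and ultimately force a contradiction with the orthogonality of the final row or column.

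The main obstacle is expected to be $[1,2,2,2,3,3]$, where the single row with $a_i=1$ is solitary and Lemma \ref{le:properties} (vi) does not apply. Here the normalization must first fix this row's unique imaginary entry, using orthogonality with two of the rows carrying two imaginary entries via Lemma \ref{le:properties} (ii.c) to show the entry equals $\pm i$, before the Lemma \ref{le:properties} (ii.d) normalization can be performed. I anticipate the subsequent row-by-row bookkeeping to parallel Lemma \ref{le:23notincs6-a1=0} with additional subcases, with the final contradiction emerging from the orthogonality of the last row against the first five.
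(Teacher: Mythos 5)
Your enumeration of the eleven candidate imaginary arrays with $a_1=1$ and $\sum_i a_i=13$ is correct, but the proof built on it has a fatal flaw and also diverges from the paper in a way that matters. The paper does not treat eleven cases: it first shows (via Lemma \ref{le:properties} (v) and (ii.b)) that $M$ cannot simultaneously contain, among rows $2,\dots,6$, a row with exactly one imaginary entry and a row with exactly two, which kills six of your eleven arrays at once and leaves only the five cases in \eqref{eq:1row=1imaginary}. You do not have this reduction, so you must genuinely dispose of all eleven arrays, and your method for doing so does not work.

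The central problem is your ``key recurring tool.'' In this paper an \emph{imaginary} entry means a non-real entry of modulus one, not a purely imaginary one; this is forced by the bookkeeping $\sum_i a_i = 36-\cR(M)$ and is visible explicitly in Lemma \ref{le:properties} (ii.d), where the matrices $H_{63},H_{64}$ have $\o=e^{2\p i/3}$ and $\o^2$ counted among the ``two imaginary entries'' of their rows. Hence an imaginary entry need not equal $\pm i$ (the paper only \emph{derives} $\pm i$ in special configurations, e.g.\ Lemma \ref{le:properties} (vi)), and multiplying a row by $-i$ does not swap its real and imaginary entries: $-i\o=e^{i\p/6}$ is still non-real. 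Consequently your dispatch of the two arrays with $a_6=6$ (no all-real row is produced) and of the three arrays with $a_6=5$ collapses. The remaining six arrays are only sketched (``extend row by row \dots\ will ultimately force a contradiction''), which is a plan rather than a proof; the paper's actual treatment of, e.g., $[1,1,1,3,3,4]$ occupies six further subcases (v.a)--(v.f) with nontrivial orthogonality computations, and of $[1,2,2,2,3,3]$ two more. So the proposal as written does not establish the lemma: you would need to replace the $\pm i$ shortcut with the paper's (v)+(ii.b) reduction (or an equivalent argument valid for general non-real entries) and then carry out the five remaining case analyses in full.
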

\begin{proof}
We shall follow the notation in the proof of Proposition \ref{pp:3133-36notinCS6} (iii). Since the $6\times6$ CHM $M$ has a row containing exactly one imaginary entry, we have $a_1\le 1$. If $a_1=0$ then the assertion follows from Lemma \ref{le:23notincs6-a1=0}. We have $a_1=1$, namely the first row of $M$ has exactly one imaginary entry. If the set $\{a_2,a_3,a_4,a_5,a_6\}$ has one $1$ and one $2$, then Lemma \ref{le:properties} (v) implies that $M$ has a submatrix $\bma 
r_1&r_2&r_3&r_4&r_5&i_1\\
r_6&r_7&r_8&i_2&i_3&r_9\\
\ema$ with real $r_j$ and $i_k$. It is a contradiction with Lemma \ref{le:properties} (ii.b). So the imaginary array $[a_1,a_2,a_3,a_4,a_5,a_6]$ of $M$ has following five cases (i)-(v).
\begin{eqnarray}
&&
[1,1,1,1,3,6],
\quad
[1,1,1,1,4,5], 
\quad
[1,2,2,2,2,4],
\notag\\&&
[1,2,2,2,3,3],
\quad
[1,1,1,3,3,4].
\label{eq:1row=1imaginary}
\end{eqnarray}
In the following we shall show that $M$ does not exist in either of the five cases, respectively. It proves the assertion. 

(i) $[a_1,a_2,a_3,a_4,a_5,a_6]=[1,1,1,1,3,6].$ Lemma \ref{le:properties} (vi) implies that the four imaginary entries are distributed in different rows and columns of $M$. So
\begin{eqnarray}
\label{eq:m=r1r2}
M=\bma
r_1 & r_2 & r_3 & r_4 & r_5 & i\\	
r_6 & r_7 & r_8 & r_9 & i & r_{10} \\	
r_{11} & r_{12} & r_{13} & i & r_{14} & r_{15} \\
r_{16} & r_{17} & i & r_{18} & r_{19} & r_{20} \\
y_1 & y_2 & y_3 & y_4 & y_5 & y_6 \\
x_1 & x_2 & x_3 & x_4 & x_5 & x_6 \\
\ema,
\end{eqnarray}
where $r_j$ is real and $x_k$ is imaginary. If $y_1,y_2$ are both real, then multiplying $x_1^*$ on the bottom row of $M$ implies a contradiction with Lemma \ref{le:properties} (ii.a) and (ii.b). Next if $y_1,y_2$ are both imaginary then it is a contradiction with Lemma \ref{le:properties} (vi).
So the only possibility is that exactly one of $y_1,y_2$ is imaginary. Since $a_5=3$, exactly two of $y_3,y_4,y_5,y_6$ are imaginary. Using equivalence on $M$, we can assume that $y_2,y_4,y_5$ are imaginary, and $y_1,y_3,y_6$ are real. Using \eqref{eq:m=r1r2} and Lemma \ref{le:properties} (ii.b), we obtain that $x_1^*x_4$ and $x_1^*x_5$ are both real. Now column $1,5,6$ of $M$ is a contradiction with Lemma \ref{le:properties} (ii.d).  We have excluded case (i). 

(ii) $[a_1,a_2,a_3,a_4,a_5,a_6]=[1,1,1,1,4,5].$ Using the argument for (i), one can obtain $M$ in \eqref{eq:m=r1r2} such that $r_j,y_1,y_3$ are real, $y_2,y_4,y_5,y_6$ are imaginary, and five of $x_1,x_2,..,x_6$ are imaginary. In particular two of $x_4,x_5,x_6$ are imaginary. So column $1,4,5,6$ of $M$ and Lemma \ref{le:properties} (ii.b) imply that $x_1$ is imaginary. The same argument implies that $x_4,x_5,x_6$ are all imaginary. So $x_2,x_3$ are real and imaginary, respectively. Column $2$ and $3$ give a contradiction with Lemma \ref{le:properties} (ii.b).

(iii) $[a_1,a_2,a_3,a_4,a_5,a_6]=[1,2,2,2,2,4].$ It follows from Lemma \ref{le:properties} (vi) that a column of $M$ has five imaginary entries. This case has been excluded by the previous cases because the transpose of a CHM is still a CHM.

(iv) $[a_1,a_2,a_3,a_4,a_5,a_6]=[1,2,2,2,3,3].$ Up to equivalence, we may assume that the imaginary entry in the first row of $M$ is the last entry of the first row. Since $a_2=a_3=a_4=2$, Lemma \ref{le:properties} (ii.b) shows that the first four entries of the last row of $M$ are all imaginary. Lemma \ref{le:properties} (v) shows that the three imaginary entries not in the last column of $M$ are in distinct rows and columns of $M$. Up to equivalence, the above argument shows 
\begin{eqnarray}
\label{eq:m=r1r2}
M=\bma
r_1 & r_2 & r_3 & r_4 & r_5 & w_1\\	
r_6 & r_7 & r_8 & r_9 & w_2 & w_6 \\	
r_{10} & r_{11} & r_{12} & w_3 & r_{13} & w_7 \\
r_{14} & r_{15} & w_4 & r_{16} & r_{17} & w_8 \\
x_1 & x_2 & x_3 & x_4 & x_5 & x_6 \\
y_1 & y_2 & y_3 & y_4 & y_5 & y_6 \\
\ema,
\end{eqnarray}
where $r_i$ is real, $w_j$ is imaginary, $x_k,y_k$ are complex. Since $a_5=a_6=3$, $\{x_k\}$ and $\{y_k\}$ has exactly three imaginary entries, respectively. If $x_6$ or $y_6$ is imaginary then the column of $M$ has at least five imaginary entries. The case has been excluded by previous cases in this lemma using the transpose of $M$. So 
$x_6$ and $y_6$ are both real. If $x_1,x_2,y_1,y_2$ are real, then $a_5=a_6=3$ implies that $x_i,y_i$ are imaginary for $i=3,4,5$. Column $1$ and $3$ give a contradiction with Lemma \ref{le:properties} (ii.b). So one of $x_1,x_2,y_1,y_2$ is imaginary. Up to equivalence we may assume that $x_2$ is imaginary. Since $a_5=3$, one of $x_3,x_4,x_5$ is imaginary. Up to equivalence we may assume that $x_3$ is imaginary. Using \eqref{eq:m=r1r2} we summary the above findings as follows. 
\begin{eqnarray}
\label{eq:m=r1r2-row56}
M=\bma
r_1 & r_2 & r_3 & r_4 & r_5 & w_1\\	
r_6 & r_7 & r_8 & r_9 & w_2 & w_6 \\	
r_{10} & r_{11} & r_{12} & w_3 & r_{13} & w_7 \\
r_{14} & r_{15} & w_4 & r_{16} & r_{17} & w_8 \\
x_1 & w_5 & w_9 & x_4 & x_5 & r_{18} \\
y_1 & y_2 & y_3 & y_4 & y_5 & r_{19} \\
\ema,
\end{eqnarray}
where $r_i$ is real, $w_j$ is imaginary and $x_k,y_l$ are complex. Since $a_5=3$, exactly one of $x_1,x_4,x_5$ is imaginary. Up to the equivalence we have two cases (iv.a) and (iv.b), namely $x_1$ or $x_4$ is imaginary. 

(iv.a) $x_1$ in \eqref{eq:m=r1r2-row56} is imaginary. 
Lemma \ref{le:properties} (vii) shows that one of $y_1,y_2$ is imaginary. By permuting column $1$ and $2$ of $M$ we may assume that $y_2$ is imaginary. Since column $2,4,5$ of $M$ are pairwise orthogonal, Lemma \ref{le:properties} (ii.b) shows that $y_4,y_5$ are imaginary. Column $1,4$ of $M$ are pairwise orthogonal. It is a contradiction with $a_6=3$ and Lemma \ref{le:properties} (ii.b).

(iv.b) $x_4$ in \eqref{eq:m=r1r2-row56} is imaginary. So $x_1$ is real. If $y_1$ is real then the case has been excluded by Lemma \ref{le:23notincs6-a1=0}.
We obtain that $y_1$ is imaginary. Since column $1,3,4$ of $M$ are pairwise orthogonal, Lemma \ref{le:properties} (ii.b) shows that $y_3,y_4$ are imaginary, and thus $y_1^*y_3$ and $y_1^*y_4$ are both real. So the transpose of $\diag(1,1,1,1,1,y_1^*)M$ has $(a_1,a_2,a_3,a_4,a_5,a_6)=(0,2,2,2,2,5)$. It has been excluded by Lemma \ref{le:23notincs6-a1=0}.

(v) 
\begin{eqnarray}
\label{eq:111334}
[a_1,a_2,a_3,a_4,a_5,a_6]=[1,1,1,3,3,4].	
\end{eqnarray}
Lemma \ref{le:properties} (vi) implies that up to equivalence, the imaginary entries in the first three rows of $M$ are the first three diagonal entries of $M=[m_{ij}]_{i,j=1,...,6}$. That is, $m_{11},m_{22},m_{33}$ are imaginary. Let $(a,b)$ be the array of numbers of imaginary entries in the two rows of the following $2\times3$ submatrix, respectively.  
\begin{eqnarray}
\label{eq:111334=3x2}	
\bma m_{44}& m_{45}& m_{46}\\m_{54}& m_{55}& m_{56} \ema.
\end{eqnarray}
Let $a\le b$ up to the permutation of row $4$ and $5$ of $M$. If $a=0$ then $M$ has a $4\times3$ real submatrix. It is a contradiction with Lemma \ref{le:properties} (v). So $(a,b)$ has the following six subcases (v.a)-(v.f).
\begin{eqnarray}
\label{eq:111334=6cases}
(1,1),(1,2),(1,3),(2,2),(2,3),(3,3).	
\end{eqnarray}  
In the following we shall investigate the six subcases, respectively. It proves assertion (v), and thus the last case in \eqref{eq:1row=1imaginary}.

(v.a) $(a,b)=(1,1)$ in \eqref{eq:111334=3x2}	 and \eqref{eq:111334=6cases}. Using \eqref{eq:111334} and \eqref{eq:111334=3x2}, up to equivalence we may assume that $m_{42},m_{43},m_{44}$ are imaginary. If $m_{54}$ is imaginary then $b=1$ implies that $m_{55},m_{56}$ are both real. The right most three columns of $M$ gives a contradiction with Lemma \ref{le:properties} (vii). So $m_{54}$ is real. Since $b=1$, \eqref{eq:111334=3x2} implies that one of $m_{55},m_{56}$ is imaginary. Up to the permuting of column $5$ and $6$ of $M$, we may assume that $m_{55}$ is imaginary.
Since $m_{i6}$ is real for $i=1,...,5$, the case of real $m_{66}$ has been excluded by Lemma \ref{le:23notincs6-a1=0}. So $m_{66}$ is imaginary. Concluding the above findings, we write $M$ by marking the imaginary entries as $w_j$'s as follows.
\begin{eqnarray}
\label{eq:(a,b)=(1,1)}
M=
\bma 
w_1 & m_{12} & m_{13} & m_{14} & m_{15} & m_{16}\\
m_{21} & w_2 & m_{23} & m_{24} & m_{25} & m_{26}\\
m_{31} & m_{32} & w_3 & m_{34} & m_{35} & m_{36}\\
m_{41} & w_4 & w_5 & w_6 & m_{45} & m_{46}\\
m_{51} & m_{52} & m_{53} & m_{54} & w_7 & m_{56}\\
m_{61} & m_{62} & m_{63} & m_{64} & m_{65} & w_8\\
\ema. 	
\end{eqnarray}
Lemma \ref{le:properties} (vi) implies that $w_1,w_2,w_3$ are all $i$ or $-i$. 
Lemma \ref{le:properties} (vi) and column $1,2,3,4$ of $M$ imply that $w_4,w_5$ are $i$ or $-i$, and so is $w_6$. Recall that $(a,b)=(1,1)$ in \eqref{eq:111334=3x2}. Thus $m_{54},m_{56}$ in \eqref{eq:(a,b)=(1,1)} are both real. If $m_{64}$ in \eqref{eq:(a,b)=(1,1)} is real, then multiplying row $4$ of $M$ by $i$ implies a CHM of exactly $13$ imaginary entries, whose column $4$ is real. Such $M$ has been excluded by Lemma \ref{le:23notincs6-a1=0}. So $m_{64}$ in \eqref{eq:(a,b)=(1,1)} is imaginary. Lemma \ref{le:properties} (ii.b) and column $4,5$ of $M$ imply that $m_{65}$ in \eqref{eq:(a,b)=(1,1)} is also imaginary. Multiplying row $4$ of $M$ by $i$ implies a CHM of exactly $13$ imaginary entries, Since $(a,b)=(1,1)$ in \eqref{eq:111334=3x2}, the CHM has at least one and at most two columns each of which contains exactly one imaginary entry. Such $M$ has been excluded by case (i)-(iv) of this lemma. We have excluded case (v.a).

(v.b) $(a,b)=(1,2)$ in \eqref{eq:111334=3x2}	 and \eqref{eq:111334=6cases}. Using Lemma \ref{le:properties} (vii), we may assume that $m_{42},m_{43},m_{44},m_{54},m_{55}$ are imaginary. Lemma \ref{le:23notincs6-a1=0} shows that $m_{66}$ is imaginary. 
Applying Lemma \ref{le:properties} (ii.b) to column $4,6$ of $M$, we obtain that $m_{64}$ is imaginary. Lemma \ref{le:properties} (ii.b) and column $2,3,6$ show that $m_{62},m_{63}$ are imaginary. Since $(a,b)=(1,3)$ in \eqref{eq:111334=3x2}	 and \eqref{eq:111334=6cases}, case (i)-(iv) of this lemma shows that $m_{52}$ is imaginary. 

We claim that all imaginary entries of $M$ are $i$ or $-i$. First Lemma \ref{le:properties} (vi) implies that $w_1,w_2,w_3$ are all $i$ or $-i$. Next Lemma \ref{le:properties} (ii.b) and the orthogonality of rows and columns of $M$ show that the remaining imaginary entries of $M$ are all $i$ or $-i$. So row $3,6$ contradicts with Lemma \ref{le:properties} (ii.b).

(v.c) $(a,b)=(1,3)$ in \eqref{eq:111334=3x2}	 and \eqref{eq:111334=6cases}. Up to equivalence we may assume that $m_{42},m_{43},m_{44.},m_{54},m_{55},m_{56}$ are imaginary. So column $1,5,6$ of $M$ gives a contradiction with the first type in Lemma \ref{le:properties} (vii). 

(v.d) $(a,b)=(2,2)$ in \eqref{eq:111334=3x2}	 and \eqref{eq:111334=6cases}. Similar to the arguments in (v.a)-(v.c), one can obtain that $m_{11},m_{22},m_{33},m_{43},m_{44},m_{45}$ are imaginary, and the first four of them are $i$ or $-i$. Since $b=2$, we investigate the position of imaginary entries in $m_{54},m_{55},m_{56}$. There are two cases (v.d.1) and (v.d.2) as follows. (v.d.1) If $m_{54},m_{55}$ are both imaginary, then Lemma \ref{le:23notincs6-a1=0} implies that $m_{66}$ is imaginary. Since column vector $4,5$ of $M$ are both orthogonal to column vector $6$, Lemma \ref{le:properties} (ii.b) implies that $m_{64},m_{65}$ are equal to $m_{66}$ or $-m_{66}$. So the product matrix $\diag(1,1,1,1,1,m_{66}^*)M$ has a $4\times3$ real submatrix. It is a contradiction with Lemma \ref{le:properties} (v). (v.d.2) On the other hand if $m_{55},m_{56}$ are both imaginary, then the only case not excluded by Lemma \ref{le:23notincs6-a1=0} and case (i)-(iv) of this lemma occurs when the imaginary array of $M^T$ is $[2,2,2,2,2,3]$. If $m_{53}$ is imaginary then $m_{63}$ is real. So the product matrix $M\diag(1,1,i,1,1,1)$ has a real third column, and it still has exactly $13$ imaginary entries. This case has been excluded by Lemma \ref{le:23notincs6-a1=0}. On the other hand if $m_{53}$ is real, up to the permuting of column $1,2$ and row $1,2$ of $M$ we may assume that $m_{52}$ is imaginary. Since $a_6=4$, we obtain that $m_{61},m_{64},m_{66}$ are imaginary, and one of $m_{62},m_{63}$ is imaginary. Using Lemma \ref{le:properties} (ii.b) we can obtain that the imaginary entries in $M$ are all $i$ or $-i$. So row $1,6$ of $M$ are not orthogonal, and we have a contradiction. 

(v.e) $(a,b)=(2,3)$ in \eqref{eq:111334=3x2}	 and \eqref{eq:111334=6cases}. Similar to the arguments in (v.a)-(v.d), one can obtain that
\begin{eqnarray}
\label{eq:(a,b)=(2,3)}
M=
\bma 
w_1 & m_{12} & m_{13} & m_{14} & m_{15} & m_{16}\\
m_{21} & w_2 & m_{23} & m_{24} & m_{25} & m_{26}\\
m_{31} & m_{32} & w_3 & m_{34} & m_{35} & m_{36}\\
m_{41} & m_{42}  & w_4 & w_5 & w_6 & m_{46}\\
m_{51} & m_{52} & m_{53} & w_7 & w_8 & w_9\\
m_{61} & m_{62} & m_{63} & m_{64} & m_{65} & m_{66}\\
\ema. 	
\end{eqnarray}
Since $a_4=4$ in \eqref{eq:111334}, any column of $M$ in \eqref{eq:(a,b)=(2,3)} has at most three imaginary entries. So the only case not excluded by Lemma \ref{le:23notincs6-a1=0} and case (i)-(iv) of this lemma occurs when the imaginary array of $M^T$ is $[2,2,2,2,2,3]$. So $m_{61},m_{62}$ and $m_{66}$ are imaginary, and one of $m_{63},m_{64}$ is imaginary up to the permuting of column $4,5$ of $M$. Using Lemma \ref{le:properties} (ii.b) and (vi), one can show that the imaginary entries in $M$ are $i$ or $-i$. So row $1,6$ of $M$ are not orthogonal. 

(v.f) $(a,b)=(3,3)$ in \eqref{eq:111334=3x2}	 and \eqref{eq:111334=6cases}. From the previous argument, we obtain that the entries $m_{11},m_{22},m_{33},m_{44},m_{45},m_{46},m_{54},m_{55},m_{56}$ are imaginary. So the only case not excluded by Lemma \ref{le:23notincs6-a1=0} and case (i)-(iv) of this lemma occurs when the imaginary array of $M$ is $[2,2,2,2,2,3]$. Up to the permuting of columns of $M$, using $a_6=4$ we may assume that $m_{61},..,m_{64}$ are imaginary. Since column $4,5$ of $M$ are orthogonal, up to the permuting of row $4,5$ of $M$ we may assume that $w_4=w_5$ or $-w_5$ by Lemma \ref{le:properties} (ii.b). Similarly, since column $4,6$ of $M$ are orthogonal we obtain that $w_7=w_9$ or $-w_9$ by Lemma \ref{le:properties} (vii). The orthogonality of column $4,5,6$ implies
\begin{eqnarray}
&& w_4=pw_5,
\quad\quad
w_7=qw_9,
\\&&	
\label{eq:w5w6c+w8w9c}
w_5w_6^*+w_8w_9^*=0,
\\&&
w_4w_5^*+w_7w_8^*+w_{13}=0,
\\&&
w_4w_6^*+w_7w_9^*+w_{13}=0.
\end{eqnarray}
where $p,q=\pm1$. Lemma \ref{le:linearalg} (i) implies that $(pqw_9w_8^*,pw_{13})=(\o,\o^2)$ or $(\o^2,\o)$, and $(pqw_5w_6^*,qw_{13})=(\o,\o^2)$ or $(\o^2,\o)$. So $p=q$, and \eqref{eq:w5w6c+w8w9c} is not satisfied. We have a contradiction.
\end{proof}

\begin{lemma}
\label{le:23notincs6-a1=2}  	
The $6\times6$ CHM containing exactly 23 real entries does not exist, if it has a row containing exactly two imaginary entries. 
\end{lemma}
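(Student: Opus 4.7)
The plan is to follow the strategy of the preceding Lemmas \ref{le:23notincs6-a1=0} and \ref{le:23notincs6-a1=1}: first pin down the imaginary array(s) compatible with the hypothesis, then analyze the resulting finitely many incidence patterns of imaginary positions and eliminate each using the structural constraints of Lemma \ref{le:properties} together with orthogonality.

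First I would show that only one imaginary array is possible. A row with exactly two imaginary entries gives $a_1 \leq 2$, and Lemmas \ref{le:23notincs6-a1=0} and \ref{le:23notincs6-a1=1} rule out $a_1 = 0$ and $a_1 = 1$, so $a_1 = 2$. Given \eqref{eq:sumai=13} and \eqref{eq:0lea1lea2}, the only admissible array is
\begin{equation}
[a_1, a_2, a_3, a_4, a_5, a_6] = [2, 2, 2, 2, 2, 3].
\end{equation}
Applying Lemmas \ref{le:23notincs6-a1=0} and \ref{le:23notincs6-a1=1} to $M^T$ (which is a CHM with $23$ real entries by Lemma \ref{le:properties} (i)), the column-imaginary-array of $M$ is also $[2,2,2,2,2,3]$. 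Up to a permutation of rows and columns, we may arrange that row $6$ and column $6$ each contain exactly $3$ imaginary entries, while rows and columns $1$--$5$ each contain exactly $2$.

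Next I would enumerate the $0/1$ incidence patterns of the thirteen imaginary positions consistent with these row and column sums, split according to whether $m_{66}$ is imaginary or real. Lemma \ref{le:properties} (v) and (vii) prune the enumeration substantially by forbidding $4\times 3$ and $3\times 4$ real submatrices and certain triples of rows. Lemma \ref{le:properties} (ii.c), applied to the rows and to the columns (via $M^T$) that carry exactly two imaginaries, forces the two such entries to be equal or opposite. Propagating these $\pm 1$ relations along the bipartite graph whose vertices are rows and columns of weight two and whose edges are the shared imaginary positions, and then rescaling by a global unit phase, reduces to the situation where every imaginary entry of $M$ lies in $\{i,-i\}$. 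For each surviving incidence pattern I would then isolate a triple of rows covered by Lemma \ref{le:properties} (ii.d), after first multiplying a suitably chosen column by $\pm i$ so that one row's two imaginary entries become real and a weight-$2$ row plays the role of the ``first real row'' in Lemma \ref{le:properties} (ii.d); that reduces those three rows to one of $H_{61}$--$H_{64}$ in \eqref{eq:row123-1}--\eqref{eq:row123-4}, from which the remaining three rows are determined by orthogonality up to finitely many choices. The desired contradiction finally comes from pairing row $6$ (with three imaginary entries) against the rows of weight two: in each pattern the induced linear relation over $\{\pm 1, \pm i\}$ either violates Lemma \ref{le:linearalg} (ii) or conflicts with Lemma \ref{le:properties} (ii.b) applied to the transpose.

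The main obstacle is the combinatorial proliferation of incidence patterns. Even after pruning by Lemmas \ref{le:properties} (v), (vii) and using the residual symmetry of permuting rows and columns $1$--$5$, several genuinely different configurations survive, and each needs its own orthogonality computation. I expect the hardest subcase to be the one in which $m_{66}$ is real and the three imaginary entries of row $6$ concentrate in, say, columns $1,2,3$, forcing the ten imaginary entries of rows $1$--$5$ to cluster in columns $4,5,6$; in that configuration the reduction to $H_{61}$--$H_{64}$ is not immediate, and one must invoke Lemma \ref{le:properties} (viii) together with the rigid characterization of vanishing three-term sums of unit-modulus numbers from Lemma \ref{le:linearalg} (i) to obtain the final contradiction.
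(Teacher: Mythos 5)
Your setup matches the paper's: you correctly reduce to the imaginary array $[2,2,2,2,2,3]$ for both the rows and the columns of $M$ (via Lemmas \ref{le:23notincs6-a1=0} and \ref{le:23notincs6-a1=1} applied to $M$ and to $M^T$), and your dichotomy on whether $m_{66}$ is real or imaginary is exactly the paper's split between \eqref{eq:222223=case1} and \eqref{eq:222223=case2}. The gap lies in the elimination stage, which is where essentially all the work is. First, Lemma \ref{le:properties} (ii.c) and (ii.d) are ``furthermore'' clauses of (ii.a) and therefore presuppose a real first row; in the $[2,2,2,2,2,3]$ configuration no row and no column is real, so neither lemma applies, and in particular the two imaginary entries \emph{within} a single weight-two row are not forced to be equal or opposite. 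The relations one can actually propagate come from Lemma \ref{le:properties} (ii.b)/(viii) applied to a \emph{pair} of weight-two rows whose imaginary entries share exactly one column (giving an equal-or-opposite relation between the two entries in that shared column), and the analogous statement for columns produces conjugate relations such as $w_6=\pm w_7^*$ in \eqref{eq:222223=case2-iii}; your bipartite propagation must track these conjugations, which it does not.

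Second, the asserted reduction to ``every imaginary entry lies in $\{i,-i\}$'' is unjustified and generally false here: a global unit-phase rescaling turns the twenty-three real entries into non-real ones, so it cannot normalize the common imaginary value $w$, which in several configurations remains a free parameter until an orthogonality sum pins it down. Indeed the paper's contradictions in cases (i) and (iii) of \eqref{eq:222223=case2} are precisely that $w_1$ is forced to be \emph{real}, and the first subcase of (iii) is killed by a vanishing three-term unit sum via Lemma \ref{le:linearalg}, not by a $\pm i$ computation. Third, your route to $H_{61}$--$H_{64}$ by ``multiplying a suitably chosen column by $\pm i$ so that one row's two imaginary entries become real'' cannot work: those two entries sit in two different columns, so multiplying one column by a phase affects only one of them (and one entry of every other row), and it cannot manufacture the real row that Lemma \ref{le:properties} (ii.d) requires. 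The paper's corresponding maneuver is different: in case \eqref{eq:222223=case1} it multiplies the weight-three \emph{column}, whose three imaginary entries are pairwise equal or opposite, by a phase so that the resulting CHM has an imaginary array other than $[2,2,2,2,2,3]$ and is excluded by the earlier lemmas; in case \eqref{eq:222223=case2} it splits on the number of imaginary entries in the block $\bma m_{22}&m_{23}\\ m_{32}&m_{33}\ema$ and derives explicit orthogonality contradictions. As written, your plan would not close any of these subcases.
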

\begin{proof}
We shall follow the notation in the proof of Proposition \ref{pp:3133-36notinCS6} (iii). Since the $6\times6$ CHM $M$ has a row containing exactly two imaginary entry, we have $a_1\le 2$. If $a_1\le1$ then the assertion follows from Lemma \ref{le:23notincs6-a1=1}. We have $a_1=2$, namely the first row of $M$ has exactly two imaginary entries. Hence, the imaginary array of $M$ is 
\begin{eqnarray}
\label{eq:222223}	
[a_1,a_2,a_3,a_4,a_5,a_6]=[2,2,2,2,2,3].
\end{eqnarray}
Using Lemma \ref{le:23notincs6-a1=0} and \ref{le:23notincs6-a1=1}, we may assume that the imaginary array of $M^T$ is also $[2,2,2,2,2,3]$. We have two cases in terms of the row and column of $M$ containing three imaginary entries. The first case is that the row and column do not have common imaginary entry, and the second case is that they do. Up to equivalence we may assume that in the two cases $M$ become respectively
\begin{eqnarray}
\label{eq:222223=case1}
\bma 
m_{11} & w_1 & w_2 & w_3 & m_{15} & m_{16}\\
w_4 & m_{22} & m_{23} & m_{24} & m_{25} & m_{26}\\
w_5 & m_{32} & m_{33} & m_{34} & m_{35} & m_{36}\\
w_6 & m_{42} & m_{43} & m_{44} & m_{45} & m_{46}\\
m_{51} & m_{52} & m_{53} & m_{54} & m_{55} & m_{56}\\
m_{61} & m_{62} & m_{63} & m_{64} & m_{65} & m_{66}\\
\ema,
\end{eqnarray}
and
\begin{eqnarray}
\label{eq:222223=case2}
\bma 
w_1 & w_2 & w_3 & m_{14} & m_{15} & m_{16}\\
w_4 & m_{22} & m_{23} & m_{24} & m_{25} & m_{26}\\
w_5 & m_{32} & m_{33} & m_{34} & m_{35} & m_{36}\\
m_{41} & m_{42} & m_{43} & m_{44} & m_{45} & m_{46}\\
m_{51} & m_{52} & m_{53} & m_{54} & m_{55} & m_{56}\\
m_{61} & m_{62} & m_{63} & m_{64} & m_{65} & m_{66}\\
\ema,
\end{eqnarray}
where $w_1,w_2,...,w_6$ are imaginary entries. Since \eqref{eq:222223}	holds for both rows and columns of $M$, row $2,3,4$ of \eqref{eq:222223=case1} all have exactly two imaginary entries. Lemma \ref{le:properties} (ii.b) implies that $w_4w_5^*$ and $w_4w_6^*$ are $1$ or $-1$. Multiplying the first column of \eqref{eq:222223=case1} results in a CHM whose imaginary array is not $[2,2,2,2,2,3]$. It has been excluded by Lemma \ref{le:23notincs6-a1=0} and \ref{le:23notincs6-a1=1}. In the following we investigate $M$ in \eqref{eq:222223=case2}.
Recall that such $M$ and $M^T$ have the imaginary array $[2,2,2,2,2,3]$. So the submatrix $\bma m_{22}& m_{23}\\ m_{32}& m_{33}
\ema$ in \eqref{eq:222223=case2} has exactly $2,1$ or $0$ imaginary entry.
We shall investigate them respectively in case (i), (ii) and (iii). It turns out that in either case, $M$ in \eqref{eq:222223=case2} does not exist. So the assertion holds.

(i) The submatrix $\bma m_{22}& m_{23}\\ m_{32}& m_{33}
\ema$ in \eqref{eq:222223=case2} has exactly $2$ imaginary entries. Up to permuting of row and column $2,3$ we may assume that $m_{22},m_{33}$ are both imaginary. Applying Lemma \ref{le:properties} (viii) to row $2,3$ and column $2,3$ of $M$, we obtain that $w_2,w_3$ are equal or opposite, and so are $w_4,w_5$. Hence $w_2,w_3,...,w_7$ are pairwise equal or opposite. Since row vector $1,2$ of $M$ are orthogonal, we obtain that $w_1$ is real. It is a contradiction with the fact that $w_1$ is imaginary. Case (i) has been excluded. 

(ii) The submatrix $\bma m_{22}& m_{23}\\ m_{32}& m_{33}
\ema$ in \eqref{eq:222223=case2} has exactly $1$ imaginary entry. Up to equivalence we may assume that $m_{22}$ and $m_{34}$ in \eqref{eq:222223=case2} are imaginary. We denote $m_{22}=w_6$ and $m_{34}=w_7$. Similar to (i), one can obtain that $w_2,w_3,w_4,w_5,w_6$ are pairwise equal or opposite. 
Up to equivalence we may assume that $w_2=w_3=w_4=w_5=w$. Since row vector $1,2$ of $M$ in \eqref{eq:222223=case2} are orthogonal, we have 
\begin{eqnarray}
\label{eq:m223w}
m_{23}w^*+ww_1^*=w_6w^*+m_{24}+m_{25}+m_{26}=0	
\end{eqnarray}
by Lemma \ref{le:properties} (ii.c). Since column vector $1,2$ of $M$ in \eqref{eq:222223=case2} are orthogonal, we have $m_{32}w^*+ww_1^*=0$. The above equations imply $m_{32}=m_{23}$. Next the orthogonality of row $1,3$ of $M$ in \eqref{eq:222223=case2} implies $m_{33}w^*+w_7+m_{35}+m_{36}=0$. Since $w,w_7$ are imaginary, we have $m_{33}w^*+w_7=m_{35}+m_{36}=0$. Now using the orthogonality of row $2,3$ of $M$ in \eqref{eq:222223=case2} we have $m_{32}w_6^*=m_{24}m_{33}w^*$, namely $w_6^*=\pm w^*$. In either case \eqref{eq:m223w} and subsequent equations imply a contradiction. 
Case (ii) has been excluded. 

(iii) The submatrix $\bma m_{22}& m_{23}\\ m_{32}& m_{33}
\ema$ in \eqref{eq:222223=case2} has no imaginary entry. We have two cases, namely whether the remaining two imaginary entries in row $2,3$ of $M$ in \eqref{eq:222223=case2} are in the same column. For the first case, suppose they are in the same column. For convenience we describe the upper-left $3\times4$ submatrix of $M$ as
\begin{eqnarray}
\bma
w_1 & w_2 & w_3 & m_{14}\\
w_4 & m_{22} & m_{23} & w_6\\
w_5 & m_{32} & m_{33} & w_7\\
\ema,	
\end{eqnarray} 
where $w_j$'s are imaginary. The row vector $2,3$ are orthogonal, and the column vector $1,4$ are also orthogonal. So we have
\begin{eqnarray}
\label{eq:w4w5*}
&& w_4w_5^*+w_6w_7^*=0
\text{\quad or \quad} \pm2,
\\&&	
\label{eq:w4w6*}
w_4w_6^*+w_5w_7^*+w_1=\pm 1.
\end{eqnarray}
Since $w_1$ is imaginary, \eqref{eq:w4w5*} has to be $\pm2$. So $w_4w_5^*$ and $w_6w_7^*$ are both real numbers, namely $1$ or $-1$. It implies that the product of $w_4w_6^*$ and $w_5w_7^*$ is real. So they are both real or both imaginary. It is a contradiction with \eqref{eq:w4w6*} in terms of Lemma \ref{le:linearalg} (ii). 

It remains to investigate the second case of (iii), namely the remaining two imaginary entries in row $2,3$ of $M$ in \eqref{eq:222223=case2} are not in the same column. Applying the same argument to column $2,3$ of $M$, we obtain that the two imaginary entries in column $2,3$ are not in the same row. Up to equivalence we may assume that 
\begin{eqnarray}
\label{eq:222223=case2-iii}
M=
\bma 
w_1 & w_2 & w_3 & m_{14} & m_{15} & m_{16}\\
w_4 & m_{22} & m_{23} & w_6 & m_{25} & m_{26}\\
w_5 & m_{32} & m_{33} & m_{34} & w_7 & m_{36}\\
m_{41} & w_8 & m_{43} & m_{44} & m_{45} & m_{46}\\
m_{51} & m_{52} & w_9 & m_{54} & m_{55} & m_{56}\\
m_{61} & m_{62} & m_{63} & m_{64} & m_{65} & m_{66}\\
\ema,
\end{eqnarray}
where $w_j$'s are imaginary. Since column $2,3$ of $M$ in \eqref{eq:222223=case2-iii} are orthogonal, Lemma \ref{le:properties} (ii.b) shows that $w_2,w_3$ are equal or opposite. Similar arguments show that so are $w_4,w_5$, so are $w_6,w_7^*$ and so are $w_8,w_9^*$. Up to equivalence we may assume that $w_2=w_3$, $w_4=w_5$, $w_6=w_7^*$, and $w_8=w_9^*$. Since row $2,3$ of $M$ are orthogonal, we have $m_{34}=-m_{25}$, and the submatrix $N:=\bma m_{22} & m_{23}\\ m_{32}&m_{33} \ema$ has at least one $1$ and one $-1$. Similarly we have $m_{43}=-m_{52}$. Up to the multiplication of $-1$ on $M$, we may assume that the submatrix $N$ has three $1$'s and one $-1$, or two $1$'s and two $-1$'s. There are three cases, namely
\begin{eqnarray}
\label{eq:n=123}
N=\bma 1&1\\1&-1 \ema,
\quad
\bma 1&-1\\1&-1 \ema,
\text{\quad or \quad}	
\bma 1&-1\\-1&1 \ema.
\end{eqnarray}
For the first matrix in \eqref{eq:n=123}, the orthogonality of row $2,3$ of $M$ in \eqref{eq:222223=case2-iii} implies that $m_{26}=-m_{36}=1$. Lemma \ref{le:properties} (ii.b) implies that column $1,6$ of $M$ are not orthogonal, and it is a contradiction with the fact that $M$ is a CHM. For the second matrix in \eqref{eq:n=123}, one can show that row $2,3$ of $M$ are not orthogonal, and it is again a contradiction with the above-mentioned fact. For the third matrix in \eqref{eq:n=123}, the orthogonality of rows and columns of $M$ show that $w_2,w_3,w_4,w_5$ are $i$ or $-i$, and the lower-right $3\times3$ submatrix of $M$ in in \eqref{eq:222223=case2-iii} has imaginary elements in either $m_{64},m_{65},m_{46},m_{56}$ or $m_{65},m_{66},m_{56}$. These facts imply that $w_1=iw_2$ or $-iw_2$ is real. It is a contradiction with the fact that $w_1$ is imaginary.  
\end{proof}

\section{Conclusions}
\label{sec:con}
We have analytically obtained the number of real entries of an $n\times n$ complex Hadamard matrix (CHM) with $n=2,3,4,6$. We also have partially characterized the number for general $n$. The first main result is that the number can be any one of $0-22,24,25,26,30$ for $n=6$. Applying our result to the existence of four MUBs in dimension six. we have shown that the number of real entries in any CHM of an MUB trio does not exceed the upper bound $22$. An open problem is to reduce the upper bound.

\section*{Acknowledgements}

Authors were supported by the  NNSF of China (Grant No. 11871089), and the Fundamental Research Funds for the Central Universities (Grant Nos. KG12080401 and ZG216S1902).

\bibliographystyle{unsrt}

\bibliography{mengfan=6x6hadamardreal}

\begin{thebibliography}{10}

\bibitem{Sz12}
Ferenc Sz\"oll\"osi.
\newblock Complex hadamard matrices of order 6: a four-parameter family.
\newblock {\em Journal of the London Mathematical Society}, 85(3):616--632,
  2012.

\bibitem{mb15}
Andrew~S. Maxwell and Stephen Brierley.
\newblock {On properties of Karlsson Hadamards and sets of mutually unbiased
  bases in dimension six}.
\newblock {\em Linear Algebra and its Applications}, 466(0):296 -- 306, 2015.

\bibitem{Goyeneche13}
D~Goyeneche.
\newblock Mutually unbiased triplets from non-affine families of complex
  hadamard matrices in dimension 6.
\newblock {\em Journal of Physics A: Mathematical and Theoretical},
  46(10):105301, 2013.

\bibitem{Turek2016A}
Ondřej Turek and Dardo Goyeneche.
\newblock A generalization of circulant hadamard and conference matrices.
\newblock 2016.

\bibitem{NICOARA2019143}
Remus Nicoara and Chase Worley.
\newblock A finiteness result for circulant core complex hadamard matrices.
\newblock {\em Linear Algebra and its Applications}, 571:143 -- 153, 2019.

\bibitem{Sz2008Parametrizing}
Ferenc Szöllősi.
\newblock Parametrizing complex hadamard matrices.
\newblock {\em European Journal of Combinatorics}, 29(5):1219--1234, 2008.

\bibitem{Djokovic2014Some}
Dragomir~Z. Djokovic, Oleg Golubitsky, and Ilias~S. Kotsireas.
\newblock Some new orders of hadamard and skew-hadamard matrices.
\newblock {\em Journal of Combinatorial Designs}, 22(6):270--277, 2014.

\bibitem{mub09}
Stephen Brierley.
\newblock {\em Mutually Unbiased Bases in Low Dimensions}.
\newblock PhD thesis, University of York, Department of Mathematics. 2009.

\bibitem{Boykin05}
P.~O. {Boykin}, M.~{Sitharam}, P.~H. {Tiep}, and P.~{Wocjan}.
\newblock {Mutually Unbiased Bases and Orthogonal Decompositions of Lie
  Algebras}, June 2005.

\bibitem{bw08}
Stephen Brierley and Stefan Weigert.
\newblock Maximal sets of mutually unbiased quantum states in dimension 6.
\newblock {\em Phys. Rev. A}, 78:042312, Oct 2008.

\bibitem{bw09}
Stephen Brierley and Stefan Weigert.
\newblock Constructing mutually unbiased bases in dimension six.
\newblock {\em Phys. Rev. A}, 79:052316, May 2009.

\bibitem{jmm09}
Philippe Jaming, M\'at\'e Matolcsi, P\'eter M\'ora, Ferenc Sz\"oll\"osi, and
  Mih\'aly Weiner.
\newblock A generalized pauli problem and an infinite family of mub-triplets in
  dimension 6.
\newblock {\em Journal of Physics A: Mathematical and Theoretical},
  42(24):245305, 2009.

\bibitem{bw10}
Stephen Brierley and Stefan Weigert.
\newblock {Mutually Unbiased Bases and Semi-definite Programming}.
\newblock {\em Journal of Physics: Conference Series}, 254:012008, June 2010.

\bibitem{deb10}
T.~{Durt}, B.-G. {Englert}, I.~{Bengtsson}, and K.~{Zyczkowski}.
\newblock {On mutually unbiased bases}.
\newblock {\em Int. J. Quantum Information}, 8(4):535--640, 2010.

\bibitem{wpz11}
M~Wiesniak, T~Paterek, and A~Zeilinger.
\newblock Entanglement in mutually unbiased bases.
\newblock {\em New Journal of Physics}, 13(5):053047, 2011.

\bibitem{mw12ijqi}
DANIEL MCNULTY and STEFAN WEIGERT.
\newblock On the impossibility to extend triples of mutually unbiased product
  bases in dimension six.
\newblock {\em International Journal of Quantum Information}, 10(05):1250056,
  2012.

\bibitem{mw12jpa135307}
Daniel McNulty and Stefan Weigert.
\newblock All mutually unbiased product bases in dimension 6.
\newblock {\em Journal of Physics A: Mathematical and Theoretical},
  45(13):135307, 2012.

\bibitem{rle11}
Philippe Raynal, Xin L\"u, and Berthold-Georg Englert.
\newblock Mutually unbiased bases in six dimensions: The four most distant
  bases.
\newblock {\em Phys. Rev. A}, 83:062303, Jun 2011.

\bibitem{mw12jpa102001}
Daniel McNulty and Stefan Weigert.
\newblock The limited role of mutually unbiased product bases in dimension 6.
\newblock {\em Journal of Physics A: Mathematical and Theoretical},
  45(10):102001, 2012.

\bibitem{mpw16}
Daniel McNulty, Bogdan Pammer, and Stefan Weigert.
\newblock Mutually unbiased product bases for multiple qudits.
\newblock {\em Journal of Mathematical Physics}, 57(3), 2016.

\bibitem{Chen2017Product}
Lin Chen and Li~Yu.
\newblock Product states and schmidt rank of mutually unbiased bases in
  dimension six.
\newblock {\em Journal of Physics A Mathematical General}, 50(47):475304, 2017.

\bibitem{Chen2018Mutually}
Lin Chen and Li~Yu.
\newblock Mutually unbiased bases in dimension six containing a product-vector
  basis.
\newblock {\em Quantum Information Processing}, 17(8):198, 2018.

\bibitem{Designolle2018Quantifying}
Sébastien Designolle, Paul Skrzypczyk, Florian Fröwis, and Nicolas Brunner.
\newblock Quantifying measurement incompatibility of mutually unbiased bases.
\newblock 2018.

\bibitem{Baerdemacker2017The}
Stijn~De Baerdemacker, Alexis~De Vos, Lin Chen, and Li~Yu.
\newblock The {B}irkhoff theorem for unitary matrices of arbitrary dimensions.
\newblock {\em Linear Algebra and Its Applications}, 514:151--164, 2017.

\bibitem{Chen2018The}
Lin Chen and Shmuel Friedland.
\newblock The tensor rank of tensor product of two three-qubit {W} states is
  eight.
\newblock {\em Linear Algebra and Its Applications}, 543:1--16, 2018.

\bibitem{kai=lma}
Kai Wang, Lin Chen, Yi~Shen, Yize Sun, and Li-Jun Zhao.
\newblock Constructing 2 × 2 × 4 and 4 × 4 unextendible product bases and
  positive-partial-transpose entangled states.
\newblock {\em Linear and Multilinear Algebra}, 0(0):1--16, 2019.

\end{thebibliography}
\end{document}